\newcommand\upuline{\bgroup\markoverwith{\hbox{\kern-0.1em \vbox{\hrule width.3em \kern-0.4ex}}}\ULon}
\newcommand\downuline{\bgroup\markoverwith{\hbox{\kern-0.1em \vbox{\hrule width.3em \kern-0.7ex}}}\ULon}
\newcommand\doubleuline{\bgroup\markoverwith{\hbox{\kern-0.1em \vbox{\hrule width.3em \kern-0.4ex \hrule \kern-0.4ex }}}\ULon}
\newcommand{\etal}{et al.}
\newcommand{\LCover}{\mathit{LCover}}
\newcommand{\SCover}{\mathit{SCover}}
\newcommand{\Reach}{\mathit{Reach}}
\newcommand{\Border}{\mathit{Border}}
\newcommand{\Cov}{\mathsf{Cov}_{\approx}}
\newcommand{\Bord}{\mathsf{Bord}_{\approx}}
\newcommand{\LSeed}{\mathsf{LSeed}_{\approx}}
\newcommand{\Dead}{\mathit{Dead}}
\newcommand{\LiveChildren}{\mathit{LSChildren}}
\newcommand{\LargestLive}{\mathit{LongestLSAnc}}
\newcommand{\LLive}{\mathsf{LongestLSeedCov}}
\newcommand{\True}{\mathbf{True}}
\newcommand{\ABorder}{$\approx$-border}
\newcommand{\ABorders}{$\approx$-borders}
\newcommand{\AOccur}{$\approx$-occurrence}
\newcommand{\ACover}{$\approx$-cover}
\newcommand{\ACovers}{$\approx$-covers}
\newcommand{\ALeftSeed}{left $\approx$-seed}
\newcommand{\ALeftSeeds}{left $\approx$-seeds}
\newcommand{\OCC}{\mathsf{Occ}}
\newcommand{\LSC}{\msf{LSChildren}}
\newcommand{\lsc}{\msf{rangeChildren}}
\newcommand{\mtt}[1]{\mathtt{#1}}
\newcommand{\msf}[1]{\mathsf{#1}}
\newcommand{\OPapprox}{\stackrel{\mathrm{op}}{\approx}}
\newcommand{\PRapprox}{\stackrel{\mathrm{pr}}{\approx}}
\newtheorem{definition}{Definition}
\newtheorem{lemma}{Lemma}
\newtheorem{theorem}{Theorem}
\newtheorem{corollary}{Corollary}
\renewenvironment{proof}[1][\proofname]{\par
  \normalfont
  \topsep6\p@\@plus6\p@ \trivlist
  \item[\hskip\labelsep{\bfseries #1}\@addpunct{\bfseries.}]\ignorespaces
}{%
  \endtrivlist
}
\renewcommand{\proofname}{Proof}
\theoremstyle{definition}
\newtheorem{example}{Example}
\title{Computing Covers under Substring Consistent Equivalence Relations} 
\author[]{Natsumi~Kikuchi}
\author[]{Diptarama~Hendrian}
\author[]{Ryo~Yoshinaka}
\author[]{Ayumi~Shinohara}
\affil[]{Graduate school of information sciences, Tohoku University, Japan}
\affil[]{\small{\texttt{natsumi\_kikuchi@shino.ecei.tohoku.ac.jp}\\
    \texttt{\{diptarama, ryoshinaka, ayumis\}@tohoku.ac.jp}}}
\date{}
\author{Natsumi~Kikuchi\inst{} \and
Diptarama~Hendrian\inst{}\orcidID{0000-0002-8168-7312} \and
Ryo~Yoshinaka\inst{}\orcidID{0000-0002-5175-465X} \and
Ayumi~Shinohara\inst{}\orcidID{0000-0002-4978-8316}}
\authorrunning{N. Kikuchi et al.}
\institute{Graduate school of information sciences, Tohoku University, Japan
\email{natsumi\_kikuchi@shino.ecei.tohoku.ac.jp\\
 	\{diptarama, ryoshinaka, ayumis\}@tohoku.ac.jp}}
\begin{document}

\maketitle

\begin{abstract}
	Covers are a kind of quasiperiodicity in strings.
A string $C$ is a cover of another string $T$ if any position of $T$ is inside some occurrence of $C$ in $T$.
The shortest and longest cover arrays of $T$ have the lengths of the shortest and longest covers of each prefix of $T$, respectively.
The literature has proposed linear-time algorithms computing longest and shortest cover arrays taking border arrays as input.
An equivalence relation $\approx$ over strings is called a \emph{substring consistent equivalence relation (SCER)}
iff $X \approx Y$ implies (1) $|X| = |Y|$ and (2) $X[i:j] \approx Y[i:j]$ for all $1 \le i \le j \le |X|$.
In this paper, we generalize the notion of covers for SCERs
and prove that existing algorithms to compute the shortest cover array and the longest cover array of a string $T$ under the identity relation will work for any SCERs taking the accordingly generalized border arrays. 
\end{abstract}

\section{Introduction}
Finding regularities in strings is an important task in string processing due to its applications such as pattern matching and string compression.
Many variants of regularities in strings have been studied
including periods, covers, and seeds \cite{Apostolico1993,Apostolico1991,Iliopoulos1996}.
One of the most studied regularities is periods due to their
mathematical combinatoric properties and their applications to string processing algorithms~\cite{Crochemore2002}.
The notion of periods has been generalized concerning various kinds of equivalence relations.
Apostolico and Giancarlo~\cite{Apostolico2008} studied periods on parameterized strings.
Gourdel et al.~\cite{Gourdel2020} studied string periods on the order-preserving model.

Covers are another kind of regularities that have extensively been studied.
For two strings $T$ and $C$,
$C$ is a \emph{cover} of $T$ if any position of $T$ is inside some occurrences of $C$ in $T$.
For example, $\mathtt{aba}$ is a cover of $T = \mathtt{\upuline{aba}\downuline{ab}\doubleuline{a}\upuline{ba}\downuline{ab}\doubleuline{a}\upuline{ba}\downuline{aba}}$
because all positions in $T$ are inside occurrences of $\mathtt{aba}$.
The other covers of $T$ are
$\mathtt{abaaba}$, $\mathtt{abaababaaba}$ and $T$ itself.
Apostolico and Ehrenfeucht~\cite{Apostolico1993}
called a string having a cover besides itself \emph{quasiperiodic}
and proposed an algorithm that computes all maximal quasiperiodic substrings of a string.
Later, Iliopoulos and Mouchard~\cite{Iliopoulos1999} and Brodal and Pedersen~\cite{Brodal2000} proposed $O(n\log n)$ time algorithm for this task. 
Apostolico \etal{}~\cite{Apostolico1991} presented a linear-time algorithm to test whether a string 
is quasiperiodic.
Breslauer~\cite{Breslauer1992} proposed an online linear-time algorithm that computes the shortest covers of all prefixes as \emph{the shortest cover array} of a string.
Moore and Smyth~\cite{Moore1994,Moore1995} proposed a linear-time algorithm to compute all covers of a string.
Later, Li and Smyth~\cite{Li2002} proposed an online linear-time algorithm to compute the longest proper covers of all prefixes of a string as \emph{the longest cover array}.
Amir et al.~\cite{Amir2019approx} defined the approximate cover problem and showed its NP-hardness.

Recently, Matsuoka~\etal{}~\cite{Matsuoka2016} introduced the notion of \emph{substring consistent equivalence relations} (\emph{SCERs}),
 which are equivalence relations $\approx$ on strings such that $X \approx Y$ implies (1) $|X| = |Y|$ and (2) $X[i:j] \approx Y[i:j]$ for all $1 \le i \le j \le |X|$, where $X[i:j]$ denotes the substring of $X$ starting at $i$ and ending at $j$.
Clearly the identity relation is an SCER.
Moreover, many variants of equivalence relations used in pattern matching are SCERs, such as parameterized pattern matching~\cite{Baker1996}, order-preserving pattern matching~\cite{Kim2014,Kubica2013}, permuted pattern matching~\cite{Katsura2013},
and Cartesian tree matching~\cite{Park2019}.
Matsuoka~\etal{}~\cite{Matsuoka2016} proposed an algorithm to compute the border array of an input string $T$ under an SCER, which can be used for pattern matching under SCERs.

In this paper, we generalize the notion of covers, which used to be defined based on the identity relation, to be based on SCERs, and prove that both of the algorithms for the shortest and longest cover arrays by Breslauer~\cite{Breslauer1992} and Li and Smyth~\cite{Li2002}, respectively, work under SCERs with no changes:
just by replacing the input of those algorithms from the border array under the identity relation to the one under a concerned SCER, their algorithms compute the shortest and longest cover arrays under the SCER.
As a minor contribution, we present a slightly simplified version of Li and Smyth's algorithm, with a correctness proof.
\Cref{table:complexity_summary} summarizes implications of our results.
The time complexities for computing shortest and longest cover arrays based on various SCERs are the same as those for border arrays.
Moreover, if border arrays under an equivalence relation
can be computed online, e.g., parameterized equivalence and order-isomorphism,
these cover arrays can be computed online by computing border arrays with existing online algorithms at the same time.


\begin{table}[t]
	\caption{The time complexity of computing border ($\Border$), shortest cover ($\SCover$) and longest cover ($\LCover$) arrays under SCERs, where $n$ is the input length, $\Pi$ is the parameter set in parameterized equivalence, and $k$ is the number of input strings in permuted equivalence.
    }
	\label{table:complexity_summary}
	\centering
	\setlength{\tabcolsep}{3pt}
	\begin{tabular}{|l|c|c|c|c|}
		\hline
		Equivalence relation & $\Border$ & $\SCover$ & $\LCover$ \\
		\hline
		Identity equivalence & $O(n)$~\cite{Knuth1977} & $O(n)$~\cite{Breslauer1992} & $O(n)$~\cite{Li2002} \\
		Parameterized equivalence & $O(n \log |\Pi|)$~\cite{Amir1994} & \cellcolor[gray]{0.8}$O(n \log |\Pi|)$ & \cellcolor[gray]{0.8}$O(n \log |\Pi|)$ \\
		Order-isomorphism & $O(n \log{n})$~\cite{Kim2014,Kubica2013} & \cellcolor[gray]{0.8}$O(n \log{n})$ & \cellcolor[gray]{0.8}$O(n \log{n})$ \\
		Permuted equivalence & $O(nk)$~\cite{Diptarama2016,Hendrian2019} & \cellcolor[gray]{0.8}$O(nk)$ & \cellcolor[gray]{0.8}$O(nk)$ \\
		\hline
	\end{tabular}
\end{table}


\section{Preliminaries}
For an alphabet $\Sigma$, $\Sigma^*$ denotes the set of all strings over $\Sigma$, including the empty string $\varepsilon$.
The length of a string $T \in \Sigma^*$ is denoted as $|T|$.
For $1 \le i \le j \le |T|$, $T[i:j]$ denotes the substring of $T$ that starts at $i$ and ends at $j$.
By $T[:j] = T[1:j]$ we denote the \emph{prefix} of $T$ that ends at $j$ and
by $T[i:] = T[i:|T|]$ the \emph{suffix} of $T$ that starts at $i$.

Matsuoka et al.~\cite{Matsuoka2016} introduced the notion of substring consistent equivalence relations, generalizing several equivalence relations proposed so far in pattern matching.
\begin{definition}[Substring Consistent Equivalence Relation (SCER) $\approx$] \label{def:scer}
	An equivalence relation ${\approx} \subseteq \Sigma^* \times  \Sigma^*$ is an \emph{SCER} if
	for two strings $X$ and $Y$, $X \approx Y$ implies (1) $|X| = |Y|$ and (2) $X[i:j] \approx Y[i:j]$ for all $1 \le i \le j \le |X|$.
	By $[X]_\approx$ we denote the $\approx$-equivalence class of $X$.
\end{definition}
For instance, matching relations in parameterized pattern matching~\cite{Baker1996}, order-preserving pattern matching~\cite{Kim2014,Kubica2013},
and permuted pattern matching~\cite{Katsura2013} are SCERs,
while matching relations in abelian pattern matching~\cite{Ehlers2015k}, indeterminate string pattern matching~\cite{Antoniou2008} and function matching~\cite{Amir2006} are not.

\begin{definition}[Parameterized equivalence~\cite{Baker1996}]\label{def:pr}
    Two strings $X$ and $Y$ of the same length
    are a \emph{parameterized match}, denoted as $X \PRapprox Y$, if $X$ can be transformed into $Y$ by applying a renaming bijection $g$ from the characters of\/ $X$ to the characters of\/ $Y$.
\end{definition}
\begin{definition}[Order-isomorphism~\cite{Kim2014,Kubica2013}]\label{def:oi}
	Two strings $X$ and $Y$ of the same length over an alphabet with a linear order $\prec$ are \emph{order isomorphic}, denoted as $X \OPapprox Y$,
	if $X[i] \prec X[j] \Leftrightarrow Y[i] \prec Y[j]$ for all $1 \le i , j \le |X|$. 
\end{definition}
\begin{definition}[\AOccur{}~\cite{Matsuoka2016}]
	For two strings $T$ and $P$, a position $1 \le i \le |T| - |P| +1$ is an \emph{\AOccur{}} of $P$ in $T$ if $P \approx T[i:i+|P|-1]$.
	The set of \AOccur{} positions of $P$ in $T$ is denoted by $\OCC_{P,T}$.
\end{definition}

\begin{definition}[\ABorder{}~\cite{Matsuoka2016}]\label{def:border}
	A string $B$ is a \emph{$\approx$-border} of $T$ if\/ $B \approx T[:|B|] \approx T[|T|-|B|+1:]$.
	We denote by $\Bord(T)$ the set of all \ABorders{} of $T$.
	A \ABorder{} $B$ of\/ $T$ is called \emph{proper} if\/ $|B| < |T|$, and called \emph{trivial} if\/ $B = \varepsilon$.
\end{definition}

\begin{lemma}[\cite{Matsuoka2016}]\label{lem:bordertext}
	(1) $B \in \Bord(S)$ and $B' \in \Bord(B)$ implies $B' \in \Bord(S)$.
	(2) $B, B' \in \Bord(S)$ and $|B'| \le |B|$ implies $B' \in \Bord(B)$.
\end{lemma}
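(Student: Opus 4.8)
The plan is to prove both statements directly from the definition of \ABorder{} (\cref{def:border}), using only the two defining properties of an SCER from \cref{def:scer}---length preservation and substring consistency---together with the transitivity of $\approx$. Throughout I would abbreviate $n = |S|$, $m = |B|$, and $m' = |B'|$, so that in both parts $m' \le m \le n$. The key observation is that membership in $\Bord(\cdot)$ splits into two independent equivalences: $B \in \Bord(S)$ means precisely $B \approx S[:m]$ (the prefix condition) and $B \approx S[n-m+1:]$ (the suffix condition). Hence in each part it suffices to establish the prefix equivalence and the suffix equivalence of the target membership separately, each by a single application of substring consistency followed by a transitivity step.

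For part (1), I would assume $B \approx S[:m]$, $B \approx S[n-m+1:]$, $B' \approx B[:m']$, and $B' \approx B[m-m'+1:]$. To get the prefix condition, I would apply substring consistency to $B \approx S[:m]$ on the index range $[1:m']$, obtaining $B[:m'] \approx S[:m']$, and then chain with $B' \approx B[:m']$ to conclude $B' \approx S[:m']$. For the suffix condition I would apply substring consistency to $B \approx S[n-m+1:]$ on the range $[m-m'+1:m]$; on the $S$ side this window lands exactly on $S[n-m'+1:]$, yielding $B[m-m'+1:] \approx S[n-m'+1:]$, and transitivity with $B' \approx B[m-m'+1:]$ then gives $B' \approx S[n-m'+1:]$. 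Together these show $B' \in \Bord(S)$.

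For part (2), I would assume $B,B' \in \Bord(S)$ with $m' \le m$ and target $B' \in \Bord(B)$. The prefix condition $B' \approx B[:m']$ follows by restricting $B \approx S[:m]$ to the range $[1:m']$ to get $B[:m'] \approx S[:m']$, then combining with $B' \approx S[:m']$ via symmetry and transitivity. The suffix condition $B' \approx B[m-m'+1:]$ follows by restricting the suffix equivalence $B \approx S[n-m+1:]$ to the range $[m-m'+1:m]$, which again corresponds to $S[n-m'+1:]$, and chaining with $B' \approx S[n-m'+1:]$.

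The only genuine obstacle is the index bookkeeping in the two suffix conditions: one must verify that cutting the length-$m'$ suffix out of the matched length-$m$ suffix of $S$ lands on $S[n-m'+1:]$ rather than on some interior window, i.e.\ that relative position $m-m'+1$ inside $S[n-m+1:n]$ is absolute position $n-m'+1$. Once this translation is confirmed, every remaining step is mechanical, requiring no case analysis, since each is just one invocation of substring consistency followed by transitivity of $\approx$.
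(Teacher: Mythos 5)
Your proof is correct: both parts follow exactly as you describe, by splitting the border condition into its prefix and suffix equivalences and applying substring consistency plus transitivity once for each, and your index translation $(n-m+1)+(m-m'+1)-1 = n-m'+1$ checks out. The paper itself gives no proof of this lemma (it is cited from Matsuoka et al.), but your argument is the standard one that the cited definition directly supports, so there is nothing to compare against and nothing to fix.
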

Based on Lemma~\ref{lem:bordertext}, Matsuoka et al.~\cite{Matsuoka2016} proposed an algorithm to compute border arrays under SCERs, which are defined as follows.
\begin{definition}[\ABorder{} array]
	The \emph{\ABorder{} array} $\Border_T$ of\/ $T$ is an array of length $|T|$ such that
	$\Border_T[i]=\max\{\,|B| \mid B \textup{ is a proper \ABorder{} of } T[:i]\,\}$ for $1 \le i \le |T|$.
\end{definition}
Tables \ref{table:example} and \ref{table:example_parameter} show examples of \ABorder{} arrays.
We use the identity relation in \Cref{table:example} and the parameterized equivalence (\Cref{def:pr}) in \Cref{table:example_parameter}.

The well-known property on $=$-borders (e.g., \cite{Aho1974design}) holds for \ABorder{s}, too.
\begin{lemma}\label{lem:barrayproperty}
	For any $1 < i \le n$,
	$\Border_{T}[i-1]+1 \ge \Border_{T}[i]$.
\end{lemma}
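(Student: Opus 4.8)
The plan is to prove the inequality by exhibiting, from the longest proper \ABorder{} of the longer prefix, a proper \ABorder{} of the shorter prefix that is exactly one character shorter. First I would set $b = \Border_T[i]$ and dispose of the degenerate case $b = 0$: since every entry of the \ABorder{} array is nonnegative, $\Border_T[i-1] + 1 \ge 1 > 0 = \Border_T[i]$ holds immediately. The substantive case is $b \ge 1$, where $T[:i]$ has a proper \ABorder{} $B$ with $|B| = b$, so by \Cref{def:border} we have $B \approx T[:b] \approx T[i-b+1:i]$.

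For the main case I would consider the string $B' = B[1:b-1]$ obtained by deleting the last character of $B$, and claim that $B'$ is a proper \ABorder{} of $T[:i-1]$. To verify this I would invoke clause (2) of the SCER definition (\Cref{def:scer}) twice. Applying it to $B \approx T[:b]$ and taking the length-$(b-1)$ prefix of each side yields $B' = B[1:b-1] \approx T[1:b-1] = T[:b-1]$. Applying it to $B \approx T[i-b+1:i]$ and taking the substring indexed $1$ through $b-1$ of each side yields $B' \approx T[i-b+1:i-1]$, which is precisely the length-$(b-1)$ suffix $T[(i-1)-(b-1)+1:i-1]$ of $T[:i-1]$. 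Hence $B' \approx T[:b-1] \approx T[(i-1)-(b-1)+1:i-1]$, so $B'$ is a \ABorder{} of $T[:i-1]$; it is proper because $b < i$ forces $|B'| = b-1 < i-1$. Consequently $\Border_T[i-1] \ge b-1 = \Border_T[i] - 1$, which rearranges to the desired bound $\Border_T[i-1] + 1 \ge \Border_T[i]$.

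The only point requiring genuine care — and the chief difference from the classical identity-relation argument — is the truncation step. Over the identity relation one simply observes that a prefix of $T[:b]$ literally equals the corresponding prefix of $B$; under a general SCER, equality of substrings is replaced by $\approx$-equivalence, and the fact that chopping off the last character preserves the relation is not automatic. It is, however, exactly what clause (2) of \Cref{def:scer} guarantees, so once that clause is applied the remainder is a routine index bookkeeping that mirrors the standard proof for $=$-borders. I therefore expect no serious obstacle beyond ensuring the substring indices line up correctly.
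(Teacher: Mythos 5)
Your proposal is correct and follows essentially the same route as the paper's proof: the key step in both is applying clause (2) of \Cref{def:scer} to truncate the relation $T[:b] \approx T[i-b+1:i]$ down to $T[:b-1] \approx T[i-b+1:i-1]$, yielding a proper \ABorder{} of $T[:i-1]$ of length $b-1$. The only cosmetic difference is that you route the argument through an abstract representative $B$ and its truncation $B'$, whereas the paper works directly with the prefix $T[:b-1]$.
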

\newcommand{\PROOFbarrayproperty}{
\begin{proof}
    Let $b = \Border_{T}[i]$.
    If $b=0$, the lemma holds by $\Border_{T}[i-1] \ge 0$.
    Otherwise,
    since $T[:b] \approx T[i-b+1:i]$ by \Cref{def:border},
    we have $T[:b-1] \approx T[i-b+1:i-1]$ by \Cref{def:scer}.
    Thus $T[:b-1] \in \Bord(T[:i-1])$ by \Cref{def:border}.
	By $b < i$, $T[:b-1]$ is a proper \ABorder{} of $T[:i-1]$, and thus $\Border_{T}[i-1] \ge b-1 $.
    Therefore, $\Border_{T}[i-1]+1 \ge b = \Border_{T}[i]$.
\qed\end{proof}
}
%
%

\begin{table}[t]
	\caption{The $=$-border array, the shortest $=$-cover array, and the longest $=$-cover array of $T=\mathtt{abaababaabaababa}$.}
	\label{table:example}
	\centering
	\setlength{\tabcolsep}{5pt}
	\begin{tabular}{|c|c|c|c|c|c|c|c|c|c|c|c|c|c|c|c|c|}
		\hline
		 & 1 & 2 & 3 & 4 & 5 & 6 & 7 & 8 & 9 & 10 & 11 & 12 & 13 & 14 & 15 & 16 \\
		\hline
		$T$ & \texttt{a} & \texttt{b} & \texttt{a} & \texttt{a} & \texttt{b} & \texttt{a} & \texttt{b} & \texttt{a} & \texttt{a} & \texttt{b} & \texttt{a} & \texttt{a} & \texttt{b} & \texttt{a} & \texttt{b} & \texttt{a}\\
		\hline
		$\Border_T$ & 0 & 0 & 1 & 1 & 2 & 3 & 2 & 3 & 4 & 5 & 6 & 4 & 5 & 6 & 7 & 8\\
		$\SCover_T$ & 1 & 2 & 3 & 4 & 5 & 3 & 7 & 3 & 9 & 5 & 3 & 12 & 5 & 3 & 15 & 3\\
		$\LCover_T$ & 0 & 0 & 0 & 0 & 0 & 3 & 0 & 3 & 0 & 5 & 6 & 0 & 5 & 6 & 0 & 8\\
		\hline
	\end{tabular}
\end{table}

\begin{table}[t]
	\caption{The $\PRapprox$-border array, the shortest $\PRapprox$-cover array, and the longest $\PRapprox$-cover array of $T=\mathtt{abaababaabaababa}$. Notice that $\SCover_T[i] = 1$ for all $i$, for $\texttt{a} \PRapprox \texttt{b}$.}
	\label{table:example_parameter}
	\centering
	\setlength{\tabcolsep}{5pt}
	\begin{tabular}{|c|c|c|c|c|c|c|c|c|c|c|c|c|c|c|c|c|}
		\hline
		& 1 & 2 & 3 & 4 & 5 & 6 & 7 & 8 & 9 & 10 & 11 & 12 & 13 & 14 & 15 & 16 \\
		\hline
		$T$ & \texttt{a} & \texttt{b} & \texttt{a} & \texttt{a} & \texttt{b} & \texttt{a} & \texttt{b} & \texttt{a} & \texttt{a} & \texttt{b} & \texttt{a} & \texttt{a} & \texttt{b} & \texttt{a} & \texttt{b} & \texttt{a}\\
		\hline
		$\Border_T$ & 0 & 1 & 2 & 1 & 2 & 3 & 3 & 3 & 4 & 5 & 6 & 4 & 5 & 6 & 7 & 8\\
		$\SCover_T$ & 1 & 1 & 1 & 1 & 1 & 1 & 1 & 1 & 1 & 1 & 1 & 1 & 1 & 1 & 1 & 1\\
		$\LCover_T$ & 0 & 1 & 2 & 1 & 2 & 3 & 3 & 3 & 1 & 5 & 6 & 1 & 5 & 6 & 3 & 8\\
		\hline
	\end{tabular}
\end{table}

\section{Covers under SCERs}
In this section, we define covers under SCERs (\ACovers) and present some properties of \ACovers, which prepares for the succeeding sections.
Section~\ref{sec:shortest} shows that the algorithm to compute shortest cover arrays by Breslauer~\cite{Breslauer1992} will work under SCERs with no change.
Section~\ref{sec:longest} presents a slight variant of the algorithm by Li and Smyth~\cite{Li2002} for computing the longest cover arrays and proves its correctness.
\begin{definition}[\ACover{}] \label{def:cover}
	We say that a string $C$ of length $c$ is an \emph{\ACover{}} of a string $T$ of length $n$ if
	there are $x_1,x_2,\dots,x_m \in \OCC_{C,T}$ such that $x_1 = 1$, $x_m = n-c+1$ and $x_{i-1} < x_i \le x_{i-1} + c$ for all $1 < i \le m$.
	Moreover, we say that an \ACover{} $C$ of\/ $T$ is \emph{proper} if\/ $c < n$.
	The set of all \ACovers{} of\/ $T$ is denoted by $\Cov(T)$.
	A string $T$ is \emph{primitive}\footnote{In some references it is called \emph{superprimitive},
    reserving the term ``primitive'' for strings that cannot be represented as $S^k$ for some string $S$ and integer $k \ge 2$.}
    if\/ $T$ has no proper \ACover{}.
\end{definition}
By definition, $\Cov(T) \subseteq \Bord(T)$.
Below we observe that basic lemmas in~\cite{Breslauer1992} on $=$-covers and $=$-borders hold for \ACovers{} and \ABorders{}.
\begin{lemma}
	\label{lem:coverisbordercover}
	If\/ $C \in \Cov(T)$, $B \in \Bord(T)$, and $|C| \le |B|$, then $C \in \Cov(B)$.
\end{lemma}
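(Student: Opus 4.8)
The plan is to transport a covering chain of $C$ in $T$ into $B$, exploiting that $B$ is $\approx$-equivalent to the length-$|B|$ prefix of $T$, and then to fix the right endpoint by invoking that $C$ is itself an \ABorder{} of $B$. Throughout write $c = |C|$, $b = |B|$, $n = |T|$, so that $c \le b \le n$. First I would pin down the two extreme occurrences: since $\Cov(T) \subseteq \Bord(T)$ we have $C \in \Bord(T)$, and together with $B \in \Bord(T)$ and $|C| \le |B|$, part~(2) of \Cref{lem:bordertext} gives $C \in \Bord(B)$. By \Cref{def:border} this means $C \approx B[:c] \approx B[b-c+1:]$, so both $1$ and $b-c+1$ belong to $\OCC_{C,B}$; these will be the first and last anchors of the chain I build for $B$.

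Next I would transfer the interior occurrences through the SCER axiom. Let $x_1 = 1 < x_2 < \dots < x_m = n-c+1$ be a covering chain of $C$ in $T$, so that $x_i \le x_{i-1}+c$ for all $i$. For any $x_i$ with $x_i \le b-c+1$ the occurrence $C \approx T[x_i:x_i+c-1]$ lies entirely inside $T[:b]$; since $B \approx T[:b]$, \Cref{def:scer} yields $B[x_i:x_i+c-1] \approx T[x_i:x_i+c-1] \approx C$, hence $x_i \in \OCC_{C,B}$. I would then let $p$ be the largest index with $x_p \le b-c+1$, which exists because $x_1 = 1 \le b-c+1$ (using $c \le b$).

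Finally I would assemble the chain $x_1,\dots,x_p$, appending $b-c+1$ whenever $x_p < b-c+1$, and check the conditions of \Cref{def:cover}. Every listed position lies in $\OCC_{C,B}$ (the $x_i$ by the transfer step, and $b-c+1$ by $C \in \Bord(B)$); the chain starts at $1$ and ends at $b-c+1$, as required for a string of length $b$. The gap bound $x_{i-1} < x_i \le x_{i-1}+c$ is inherited from $T$ for indices up to $p$, and for the appended step I would argue from the maximality of $p$: if $x_p < b-c+1$ then $p<m$ (otherwise $x_m = n-c+1 \le b-c+1$ forces $n=b$ and hence $x_p = b-c+1$), so $x_{p+1}$ exists with $x_{p+1} > b-c+1$, and from $x_{p+1} \le x_p+c$ we obtain $b-c+1 \le x_p+c$. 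This establishes $C \in \Cov(B)$.

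The hard part will be the right endpoint: the chain covering $T$ need not contain an occurrence beginning exactly at $b-c+1$, so the steps that genuinely need care are producing this position as an \AOccur{} in $B$ via \Cref{lem:bordertext} and bounding the final gap by $c$ through the maximality of $p$. Everything else is the routine transport of substring equivalences permitted by the SCER axiom.
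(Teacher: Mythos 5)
Your proof is correct and takes essentially the same approach as the paper's: the paper likewise observes that $\OCC_{C,B}$ is exactly the set of occurrences of $C$ in $T$ that fit inside $T[:|B|]$, and that the only missing piece is the occurrence at position $|B|-|C|+1$, which it gets from $C \approx T[|T|-|C|+1:] \approx B[|B|-|C|+1:]$ (your appeal to \Cref{lem:bordertext}~(2) amounts to the same fact). The paper's version is simply far terser, leaving implicit the chain-assembly and final gap bound that you spell out via the maximality of $p$.
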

\newcommand{\PROOFcoverisbordercover}{
\begin{proof}
	By $\OCC_{C,B} = \{\, i \in \OCC_{C,T} \mid i+m-1 \le b \,\}$, it suffices to show $b-m+1 \in \OCC_{C,B}$.
	This follows the fact that $C \approx T[n-m+1:] \approx B[b-m+1:]$.
\qed\end{proof}
}
%
\begin{lemma}\label{lem:shortcoverlong}
	For any $C,C'\in \Cov(T)$ such that $|C| \le |C'|$,
	$C \in \Cov(C')$.
\end{lemma}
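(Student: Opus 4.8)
The plan is to derive the statement as an immediate specialization of \Cref{lem:coverisbordercover}, exploiting the inclusion $\Cov(T) \subseteq \Bord(T)$ that was recorded right after \Cref{def:cover}. The point is that the longer cover $C'$ plays a double role: as an element of $\Cov(T)$ it is in particular a \ABorder{} of $T$, which is precisely the kind of object to which \Cref{lem:coverisbordercover} lets us push a shorter cover down.

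Concretely, first I would observe that since $C' \in \Cov(T)$ and $\Cov(T) \subseteq \Bord(T)$, we have $C' \in \Bord(T)$. I would then instantiate \Cref{lem:coverisbordercover} with $B := C'$ and check its three hypotheses: $C \in \Cov(T)$ holds by assumption; $C' \in \Bord(T)$ was just established; and $|C| \le |C'| = |B|$ is the size hypothesis of the present lemma. The conclusion of \Cref{lem:coverisbordercover} is then exactly $C \in \Cov(C')$, which is what we want.

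Because this is a one-line reduction, I do not anticipate a genuine obstacle; the only conceptual content is recognizing that a cover is always a border, so that the longer cover can be treated as a border target for the shorter one. If instead one wanted a self-contained argument unfolding \Cref{def:cover}, the work would be to transport the chain of \AOccurs{} of $C$ witnessing its cover of $T$ onto $C'$: the \AOccurs{} of $C$ inside the prefix $T[:|C'|]$ correspond, via $C' \approx T[:|C'|]$ and property~(2) of \Cref{def:scer}, to \AOccurs{} of $C$ in $C'$, and one must verify that the chaining condition $x_{i-1} < x_i \le x_{i-1}+|C|$ is preserved and that the final occurrence reaches position $|C'|-|C|+1$. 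This last alignment at the right end---ensuring the covering chain terminates flush with the suffix of $C'$---is the delicate step, and it is exactly the step that \Cref{lem:coverisbordercover} has already abstracted away (its proof handles the border-suffix occurrence $b-m+1 \in \OCC_{C,B}$). Hence invoking that lemma is the cleanest route.
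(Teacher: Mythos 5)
Your proposal is correct and is exactly the paper's argument: the paper's proof is the one-liner ``By Lemma~\ref{lem:coverisbordercover} and $C' \in \Cov(T) \subseteq \Bord(T)$.'' Your instantiation of $B := C'$ and the verification of the hypotheses match the intended reduction precisely.
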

\newcommand{\PROOFshortcoverlong}{
\begin{proof}
By Lemma~\ref{lem:coverisbordercover} and $C' \in \Cov(T) \subseteq \Bord(T)$.
\qed\end{proof}
}
\begin{lemma}\label{lem:covercover}
	If $C \in \Cov(T)$ and $C' \in \Cov(C)$, then $C' \in \Cov(T)$.
\end{lemma}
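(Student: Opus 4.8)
The plan is to read \Cref{lem:covercover} as transitivity of the $\approx$-cover relation and to prove it by \emph{transporting} the occurrences of $C'$ inside $C$ up into $T$ through the occurrences of $C$ inside $T$, using substring consistency (\Cref{def:scer}). Write $n=|T|$, $c=|C|$, $c'=|C'|$, and fix witnessing position sequences as in \Cref{def:cover}: $x_1=1<\dots<x_m=n-c+1$ for $C\in\Cov(T)$ with gaps at most $c$, and $y_1=1<\dots<y_k=c-c'+1$ for $C'\in\Cov(C)$ with gaps at most $c'$. The candidate occurrence positions of $C'$ in $T$ are then $z_{j,l}=x_j+y_l-1$, and I collect them into a set $Z=\{\,z_{j,l}\mid 1\le j\le m,\ 1\le l\le k\,\}$.

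First I would verify that each $z_{j,l}$ is genuinely an \AOccur{} of $C'$ in $T$. Since $x_j\in\OCC_{C,T}$ gives $C\approx T[x_j:x_j+c-1]$, clause (2) of \Cref{def:scer} applied to the relative window of length $c'$ starting at $y_l$ yields $C[y_l:y_l+c'-1]\approx T[z_{j,l}:z_{j,l}+c'-1]$; combining this with $C'\approx C[y_l:y_l+c'-1]$ (from $y_l\in\OCC_{C',C}$) by transitivity of $\approx$ shows $z_{j,l}\in\OCC_{C',T}$. The endpoints then fall out of the arithmetic: $z_{1,1}=1$ and $z_{m,k}=(n-c+1)+(c-c'+1)-1=n-c'+1$, so $1=\min Z$ and $n-c'+1=\max Z$.

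The heart of the argument is to promote $Z$ to a valid witnessing sequence, i.e.\ to control the gaps between consecutive elements of $Z$ once sorted. To avoid a messy case analysis caused by the possibly deep overlaps of consecutive occurrences of $C$, I would first isolate an auxiliary covering statement: every position $1\le i\le n$ admits some $z\in Z$ with $z\le i\le z+c'-1$. This follows in two hops, each using only a gap condition from \Cref{def:cover}: the sequence $(x_j)$ places $i$ inside some occurrence of $C$, say $x_j\le i\le x_j+c-1$, and then the sequence $(y_l)$ places the relative position $i-x_j+1$ inside some occurrence of $C'$ in $C$, after which $z=x_j+y_l-1$ is the desired element.

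The main obstacle, and the only place the gap bounds are truly exploited, is the final extraction. I would sort $Z$ and argue by contradiction that two consecutive elements $z<z'$ cannot satisfy $z'>z+c'$: since $z$ has a successor it is not the maximum, so $z\le n-c'$ and the position $i=z+c'\le n$ is legitimate; by the covering statement there is $z''\in Z$ with $z''\le i\le z''+c'-1$, whence $z<z''\le z+c'<z'$, contradicting that $z'$ is the immediate successor of $z$ in $Z$. Hence consecutive sorted elements of $Z$ differ by at most $c'$, and together with $\min Z=1$ and $\max Z=n-c'+1$ the sorted sequence $Z$ is itself a witnessing sequence, giving $C'\in\Cov(T)$.
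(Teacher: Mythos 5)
Your proof is correct, but it follows a genuinely different route from the paper's. The paper first obtains $1, n-c'+1 \in \OCC_{C',T}$ via \Cref{lem:bordertext}, and then shows that every $z \in \OCC_{C',T}\setminus\{1\}$ admits a predecessor $z' \in \OCC_{C',T}$ with $z' < z \le z'+c'$: taking the rightmost $x \in \OCC_{C,T}$ with $x < z$, it distinguishes whether the window of $C'$ at $z$ lies entirely inside the copy of $C$ at $x$ (then $C' \in \Cov(C)$ supplies an earlier occurrence of $C'$ inside that copy) or sticks out past its right end (then the suffix occurrence $C' \approx C[c-c'+1:]$ transported to position $x+c-c'$ serves as $z'$); chaining predecessors down from $n-c'+1$ yields the witnessing sequence. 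You instead build the candidate occurrence set explicitly as the composition $Z = \{\,x_j + y_l - 1\,\}$ of the two witnessing sequences, prove the pointwise statement that the windows of $Z$ cover every position of $T$, and then recover the required gap bound between consecutive sorted elements of $Z$ by contradiction. Both arguments invoke substring consistency (\Cref{def:scer}) at the same essential point --- transporting occurrences of $C'$ within $C$ to occurrences within $T$ --- but your version avoids the paper's case analysis on how deeply consecutive copies of $C$ overlap and does not rely on \Cref{lem:bordertext}, at the cost of the extra (correctly handled) step of converting pointwise coverage back into a witnessing sequence in the sense of \Cref{def:cover}.
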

\newcommand{\PROOFcovercover}{
\begin{proof}
    Let $c = |C|$ and $c' = |C'|$.
	By \Cref{def:cover}, $C \in \Bord(T)$ and $C' \in \Bord(C)$.
	Thus, $C' \in \Bord(T)$ by \Cref{lem:bordertext} (2).
	From \Cref{def:border}, we have $1, n-c'+1 \in \OCC_{C',T}$.

    For $z \in \OCC_{C',T} \setminus \{1\}$, let $x = \max\{\, x \in\OCC_{C,T} \mid x < z\,\}$.
    Then either (1) $x < z < z + c' \leq x + c$ or (2) $x < z \leq x + c < z + c'$ holds.
    In the case (1), since $C' \in \Cov(C)$, there exists $z' \in \OCC_{C',T}$ such that $z' < z \leq z' + c'$.
    In the case (2), since $C' \in \Cov(C)$, we have $C' \approx C[c-c'+1:] \approx T[x+c-c':x+c-1]$.
    For $z' = x+c-c'$,
    we get $z' \in \OCC_{C',T}$ and $z' < z \leq z' + c'$.
	Therefore, we get $C' \in \Cov(T)$.
\qed\end{proof}
}
\begin{lemma}\label{lem:primitivecover}
An \ACover{} $C$ of $T$ is primitive iff it is a shortest \ACover{} of $T$.
\end{lemma}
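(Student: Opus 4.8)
The plan is to prove the two implications separately, each as a short deduction from the composition lemmas already established. First I note that $\Cov(T)$ is nonempty, since $T \in \Cov(T)$, so a shortest \ACover{} of $T$ always exists and the statement is not vacuous. With that in place, neither direction requires any new combinatorial work: the substance is entirely absorbed into Lemmas \ref{lem:shortcoverlong} and \ref{lem:covercover}.

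For the direction ``shortest $\Rightarrow$ primitive'' I would argue by contradiction. Suppose $C$ is a shortest \ACover{} of $T$ but is not primitive; then by \Cref{def:cover} $C$ has a proper \ACover{} $C'$, that is, $C' \in \Cov(C)$ with $|C'| < |C|$. Since $C \in \Cov(T)$ and $C' \in \Cov(C)$, \Cref{lem:covercover} (transitivity of covering) gives $C' \in \Cov(T)$. This exhibits an \ACover{} of $T$ strictly shorter than $C$, contradicting the minimality of $|C|$. Hence $C$ must be primitive.

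For the converse ``primitive $\Rightarrow$ shortest'' let $C_0$ denote a shortest \ACover{} of $T$, so $|C_0| \le |C|$ because $C$ is itself an \ACover{} of $T$. As $C_0, C \in \Cov(T)$ with $|C_0| \le |C|$, \Cref{lem:shortcoverlong} yields $C_0 \in \Cov(C)$. Primitivity of $C$ forbids any proper \ACover{}, so $C_0$ cannot be proper; therefore $|C_0| = |C|$, which shows that $C$ attains the minimal cover length and is thus a shortest \ACover{} of $T$.

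The only point requiring care is the bookkeeping around the word \emph{proper}: in each direction I must check that the strict length inequality ($|C'| < |C|$ in the first direction, $|C_0| < |C|$ in the second) is exactly the condition supplied by ``proper \ACover{}'', so that the contradiction and the conclusion concern strict versus non-strict length correctly. Beyond this, there is no genuine obstacle; the lemma merely repackages \Cref{lem:shortcoverlong} and \Cref{lem:covercover}.
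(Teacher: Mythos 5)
Your proof is correct: the paper leaves this lemma without an explicit proof, and your derivation via Lemma~\ref{lem:covercover} (for shortest~$\Rightarrow$~primitive) and Lemma~\ref{lem:shortcoverlong} (for primitive~$\Rightarrow$~shortest) is exactly the intended argument, with the proper/strict-inequality bookkeeping handled correctly.
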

\begin{lemma}\label{lem:cover_overlap}
For $0 \le i-1 \le j \le |T|$,  $\Cov(T[:j]) \cap \Cov(T[i:]) \subseteq \Cov(T)$.
\end{lemma}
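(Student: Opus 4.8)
The plan is to produce a single chain of $\AOccurs$ of $C$ in $T$ witnessing $C \in \Cov(T)$, obtained by merging the two chains that witness $C \in \Cov(T[:j])$ and $C \in \Cov(T[i:])$. Write $c = |C|$ and $n = |T|$. First I would record the routine SCER-based identification of occurrences: since $T[:j][p:p+c-1] = T[p:p+c-1]$ and $T[i:][q:q+c-1] = T[i+q-1:i+q+c-2]$, an $\AOccur$ of $C$ in $T[:j]$ is exactly a $p \in \OCC_{C,T}$ with $p+c-1 \le j$, and an $\AOccur$ of $C$ in $T[i:]$ corresponds to a $p \in \OCC_{C,T}$ with $p \ge i$. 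Hence both witnessing chains already live inside $\OCC_{C,T}$. The hypothesis $i-1 \le j$ is used exactly once here, to observe that the integer intervals $[1,j]$ and $[i,n]$ have union $[1,n]$, since no integer lies strictly between $j$ and $i$.

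Next I would merge. Let $1 = x_1 < \dots < x_a = j-c+1$ be the chain for $T[:j]$ and $i = y_1 < \dots < y_b = n-c+1$ the chain for $T[i:]$, both with consecutive gaps at most $c$ as required by \Cref{def:cover}, and let $z_1 < \dots < z_M$ be the sorted union $\{x_1,\dots,x_a\} \cup \{y_1,\dots,y_b\} \subseteq \OCC_{C,T}$. The two endpoint conditions are immediate: $z_1 = 1$ because $x_1 = 1$ is the smallest possible occurrence, and $z_M = n-c+1$ because $y_b = n-c+1$ while every $p \in \OCC_{C,T}$ satisfies $p \le n-c+1$. It therefore remains only to check that consecutive gaps in the merged chain never exceed $c$, which by \Cref{def:cover} completes the proof that $C \in \Cov(T)$. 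I would also note, for use below, that the endpoint and gap conditions of each original chain force the intervals $[x_\ell, x_\ell+c-1]$ to cover $[1,j]$ contiguously and the intervals $[y_\ell, y_\ell+c-1]$ to cover $[i,n]$ contiguously.

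I expect the gap condition to be the only real obstacle, and I would settle it by contradiction. Suppose $z_k > z_{k-1} + c$ for some $k$, and consider the position $t = z_{k-1}+c$; from $1 \le z_{k-1}$ and $z_k \le n-c+1$ one checks $1 \le t \le n$, so $t$ lies in $[1,j]$ or in $[i,n]$. In the first case the contiguity of the $x$-chain gives some $x_\ell$ with $x_\ell \le t \le x_\ell + c - 1$, whence $z_{k-1} = t - c < x_\ell \le t \le z_{k-1}+c < z_k$; the second case is symmetric with the $y$-chain. Either way we obtain an element of the merged set strictly between the consecutive elements $z_{k-1}$ and $z_k$, a contradiction. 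The delicate point is precisely that $t$ cannot simultaneously fall into a gap of both interval systems, and this is exactly where the overlap hypothesis $i-1 \le j$ (equivalently $[1,j] \cup [i,n] = [1,n]$) is indispensable; without it $t$ could lie in the uncovered window between the prefix and suffix.
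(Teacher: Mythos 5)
Your proof is correct. The paper states \Cref{lem:cover_overlap} without giving a proof, so there is nothing to compare against; your argument---translating both witnessing chains into subsets of $\OCC_{C,T}$, taking their sorted union, fixing the endpoints, and ruling out a gap $z_k > z_{k-1}+c$ by locating $t = z_{k-1}+c$ in one of the two contiguously covered intervals $[1,j]$ or $[i,n]$ (which is exactly where the hypothesis $i-1 \le j$ is used)---is the natural one and checks out in every detail.
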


\begin{lemma}
	\label{lem:forreach}
	A string $C$ of length $c$ is a proper \ACover{} of\/ $T$ of length $n$ iff\/ $C \in \Bord(T)$ and $C \in \Cov(T[:n-i])$ for some $1 \le i \le c$.
\end{lemma}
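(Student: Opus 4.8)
The plan is to prove both directions directly from \Cref{def:cover}, exploiting the anchored sequence of \AOccurs{} that witnesses a cover, together with the containment $\Cov(T) \subseteq \Bord(T)$ noted after \Cref{def:cover}.

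For the forward direction, assume $C$ is a proper \ACover{} of $T$. Then $C \in \Cov(T) \subseteq \Bord(T)$ gives the first conjunct immediately. Let $x_1 = 1 < x_2 < \dots < x_m = n-c+1$ be witnessing \AOccurs{} as in \Cref{def:cover}. Since $c < n$ we have $n-c+1 > 1 = x_1$, so $m \ge 2$ and $x_{m-1}$ exists. I would set $i = x_m - x_{m-1}$; the gap condition $x_{m-1} < x_m \le x_{m-1}+c$ yields $1 \le i \le c$. A short computation ($n - i = n - x_m + x_{m-1} = (c-1) + x_{m-1}$) shows $n-i = x_{m-1}+c-1$, so the occurrences $x_1,\dots,x_{m-1}$ all end at positions at most $n-i$ and hence lie in $\OCC_{C,T[:n-i]}$. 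They witness $C \in \Cov(T[:n-i])$, because $x_1 = 1$, $x_{m-1} = (n-i)-c+1$, and the gap conditions for $1 < j \le m-1$ are inherited unchanged.

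For the backward direction, assume $C \in \Bord(T)$ and $C \in \Cov(T[:n-i])$ with $1 \le i \le c$. Let $y_1 = 1 < \dots < y_k = (n-i)-c+1$ witness the cover of $T[:n-i]$; since $T[:n-i]$ is a prefix of $T$, these are also \AOccurs{} of $C$ in $T$. The assumption $C \in \Bord(T)$ gives $C \approx T[n-c+1:]$, hence $n-c+1 \in \OCC_{C,T}$. It then remains to append $n-c+1$ to the sequence and check the two boundary inequalities: $y_k < n-c+1$ follows from $i \ge 1$, and $n-c+1 \le y_k + c = n-i+1$ follows from $i \le c$. The extended sequence witnesses $C \in \Cov(T)$, and since $C$ covers $T[:n-i]$ we have $c \le n-i < n$, so the \ACover{} is proper.

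The only delicate part is the index bookkeeping in the forward direction---choosing $i$ as the final gap $x_m - x_{m-1}$ and confirming that $n-i$ coincides with the right endpoint of the penultimate occurrence---but once this arithmetic is pinned down, both directions close without further combinatorial input.
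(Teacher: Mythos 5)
Your proof is correct and follows essentially the same strategy as the paper's: for the forward direction you truncate the witnessing sequence at its penultimate occurrence and take $i$ to be the final gap (the paper equivalently uses the two largest elements of $\OCC_{C,T}$), and for the backward direction you append the border occurrence $n-c+1$ to a covering sequence of the prefix and verify the two boundary inequalities. Nothing further is needed.
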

\newcommand{\PROOFforreach}{
\begin{proof}
	$(\Longrightarrow)$
	Since $C$ is a proper \ACover{} of $T$,
	$\OCC_{C,T} \setminus \{1\} \neq \emptyset$ and
	$C \in \Bord(T)$ by \Cref{def:cover}.
    Let $x_k = \max(\OCC_{C,T}) = n-m+1 > 1$ and
     $x_{k-1} = \max(\OCC_{C,T} \setminus \{x_k\})$.
	Then we have $x_{k-1} < x_{k} \leq x_{k-1} + m$ by \Cref{def:cover}.
	Clearly $C \in \Cov(T[:x_{k-1}+m-1])$ and $n - 1 \ge x_{k-1}+m-1 \ge n-m$.

	$(\Longleftarrow)$
	Assume that $C \in \Bord(T)$ and $C \in \Cov(T')$, where $T' = T[:n-i]$ for some $1 \le i \le m$.
	Since $C \in \Bord(T)$, we have $n-m+1 \in \OCC_{C,T}$.
	Let $\{x_1,\dots,x_{k'},\dots,x_k\}=\OCC_{C,T}$ and
	 $\{x_1,\dots,x_{k'}\}=\OCC_{C,T'}$ where $x_{1}<\dots<x_{k'}<\dots<x_k$, $x_1=1$, $x_{k'}=n-i-m+1$ and $x_k=n-m+1$.
	For all $1 < j \le k'$, we have $x_{j-1} < x_j \le x_{j-1}+m$ by $C \in \Cov(T')$.
	For all $j > k'$, we have $x_{j-1} < x_j \le x_k \le x_{k'}+m \le x_{j-1}+m$.
	Therefore, $C \in \Cov(T)$.
\qed\end{proof}
}


In the seaquel of this paper, we fix an input string $T$ of length $n$.

\section{Shortest \ACover{} array}\label{sec:shortest}

\begin{algorithm2e}[t]
	\caption{Algorithm computing the shortest \ACover{} array}
	\label{alg:shortest_cover}
	\SetVlineSkip{0.5mm}
	let $\Border$ be the \ABorder{} array of $T$\;
	$\Reach[i] \leftarrow 0$ for $1 \le i \le n$\; 
	\For{$1 \le i \le n$}{
		\If{\scalebox{0.97}[1]{$\Border[i] > 0$ $\mathbf{and}$ $\Reach[\SCover[\Border[i]]] \ge i - \SCover[\Border[i]]$}}{\label{alg:updatecondition}
			$\SCover[i] \leftarrow \SCover[\Border[i]]$\;
			$\Reach[\SCover[i]] \leftarrow i$\;\label{alg:updatereach}
		}
		\Else{
			$\SCover[i] \leftarrow i$\;
			$\Reach[i] \leftarrow i$\;
		}
	}
\end{algorithm2e}

In this section we prove that \Cref{alg:shortest_cover} by Breslauer~\cite{Breslauer1992} computes the shortest \ACover{} array for an input string $T$ based on the $\approx$-border array.

\begin{definition}[Shortest \ACover{} array]
	The \emph{shortest \ACover{} array} $\SCover_T$ of\/ $T$ is an array of length $n$ such that
	$\SCover_T[i]=\min\{\,|C| \mid C \in \Cov(T[:i])\,\}$ for $1 \le i \le n$.
\end{definition}
Tables \ref{table:example} and \ref{table:example_parameter} show examples of shortest \ACover{} arrays.
Note that $\SCover_T[i]$ is the length of the unique (modulo $\approx$-equivalence) primitive cover of $T[:i]$ by Lemma~\ref{lem:primitivecover}.


\Cref{alg:shortest_cover} uses an additional array $\Reach$ to compute $\SCover$.
The algorithm updates $\Reach$ and $\SCover$ incrementally so that $\Reach[j]$ shall be the length of the longest prefix of $T$ of which $T[:j]$ is a $\approx$-cover
and $\SCover$ shall be the shortest $\approx$-cover array.
More precisely,
in each iteration $i$, the algorithm updates $\Reach$ and $\SCover$ so that they satisfy the following properties at the end of the $i$-th iteration.
\begin{description}
	\item[] \textbf{R}($i$)
	$\Reach[j]=0$ if $j > i$ or $T[:j]$ is not primitive.
	Otherwise, $\Reach[j] = \max \{\,p \mid T[:j] \textup{ is a \ACover{} of } T[:p] \textup{ and } p \le i\,\} $.
    \item[]\textbf{S}{($i$)}
      For $1 \le j \le i$, $\SCover[j]=\min\{\,|C| \mid C \in \Cov(T[:j])\,\}$.
\end{description}
If \textbf{S}{($n$)} holds, we have $\SCover = \SCover_T$.
\begin{theorem}
	Given the \ABorder{} array of text\/ $T$ of length $n$,
	\Cref{alg:shortest_cover} computes the shortest \ACover{} array $\SCover_T$ of\/ $T$ in $O(n)$ time.
\end{theorem}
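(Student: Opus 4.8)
The plan is to prove by induction on $i$ that the invariants \textbf{R}($i$) and \textbf{S}($i$) hold at the end of the $i$-th iteration; then \textbf{S}($n$) gives $\SCover = \SCover_T$, and since each iteration performs only a constant number of array reads, comparisons, and writes, the total time is $O(n)$ once the \ABorder{} array is supplied. The base case $i = 0$ is immediate from the initialization, where $\Reach[j] = 0$ for all $j$ and \textbf{S}($0$) is vacuous. For the inductive step I would assume \textbf{R}($i-1$) and \textbf{S}($i-1$) and split on the update condition of line~\ref{alg:updatecondition}, noting throughout that $\Border[i] < i$ so that $\SCover[\Border[i]]$ has already been correctly set by \textbf{S}($i-1$).

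The conceptual heart of the argument is the identity $\SCover[i] = \SCover[\Border[i]]$ whenever $T[:i]$ has a proper \ACover{}. I would establish it by taking the shortest cover $C$ of $T[:i]$, which is primitive by \Cref{lem:primitivecover}: from $C \in \Bord(T[:i])$ we obtain $|C| \le \Border[i]$, and \Cref{lem:coverisbordercover} gives $C \in \Cov(T[:\Border[i]])$, so $\SCover[\Border[i]] \le |C|$. Conversely, the primitive shortest cover $C'$ of $T[:\Border[i]]$ satisfies $|C'| \le |C|$, and since both cover $T[:\Border[i]]$, \Cref{lem:shortcoverlong} yields $C' \in \Cov(C)$; primitivity of $C$ then forces $|C'| = |C|$. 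This singles out $c' = \SCover[\Border[i]]$ as the only candidate length the algorithm must test.

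The reach test is then justified through \Cref{lem:forreach}. With $c' = \SCover[\Border[i]]$, the string $T[:c']$ is a \ABorder{} of $T[:i]$ by \Cref{lem:bordertext}(1), and by \textbf{R}($i-1$) the value $\Reach[c']$ is the largest $p \le i-1$ with $T[:c'] \in \Cov(T[:p])$. Hence $\Reach[c'] \ge i - c'$ holds exactly when $T[:c'] \in \Cov(T[:i-\ell])$ for some $1 \le \ell \le c'$, which by \Cref{lem:forreach} is equivalent to $T[:c']$ being a proper \ACover{} of $T[:i]$. When the test succeeds, $T[:c']$ is such a cover and, being primitive, is the shortest one, since a strictly shorter cover would be a proper cover of the primitive $T[:c']$; thus $\SCover[i] \leftarrow c'$ is correct. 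When the test fails, the contrapositive together with the length identity above shows $T[:i]$ has no proper cover, so $\SCover[i] \leftarrow i$ is correct. This re-establishes \textbf{S}($i$).

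Finally I would verify \textbf{R}($i$). Entries $\Reach[j]$ with $j > i$ or with $T[:j]$ non-primitive are untouched, because the only index written is $\SCover[i]$ in the success branch or $i$ in the failure branch, and in each branch the corresponding prefix is primitive. Entries for primitive $j \neq \SCover[i]$ retain their \textbf{R}($i-1$) value, because if such a $T[:j]$ covered $T[:i]$ it would be the unique primitive cover of $T[:i]$ by \Cref{lem:primitivecover} and hence equal $\SCover[i]$, a contradiction. The single updated entry is set to $i$, which is correct because the chosen prefix covers $T[:i]$ and $i$ is the largest prefix length admitted at this stage. The step I expect to be the main obstacle is precisely this bookkeeping—arguing that no stale $\Reach$ value needs refreshing—together with matching the off-by-one in the inequality $\Reach[c'] \ge i - c'$ exactly to the range $1 \le \ell \le c'$ appearing in \Cref{lem:forreach}.
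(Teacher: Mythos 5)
Your proposal is correct and follows essentially the same route as the paper: the same induction on the invariants \textbf{R}($i$) and \textbf{S}($i$), using \Cref{lem:primitivecover}, \Cref{lem:coverisbordercover}, \Cref{lem:shortcoverlong}, \Cref{lem:bordertext} and \Cref{lem:forreach} in the same roles. The only cosmetic difference is that you isolate the identity $\SCover_T[i]=\SCover_T[\Border_T[i]]$ (for non-primitive $T[:i]$) as an explicit upfront step, whereas the paper derives the same fact inside the failure branch by contradiction via uniqueness of the primitive \ACover{}.
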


\begin{proof}
	The linear time complexity is obvious.

	We show the above invariants \textbf{R}{($i$)} and \textbf{S}{($i$)} by induction on $i$.
	Clearly the invariant holds for $i=0$, i.e., the initial values of $\Reach[j]=0$ for all $j >0$ satisfy the invariant \textbf{R}($0$).
	Vacuously \textbf{S}($0$) is true.

	Assume that \textbf{R}{($i-1$)} and \textbf{S}{($i-1$)} hold at the beginning of the $i$-th iteration.
    Let $b = \Border[i]$ and $c = \SCover[b]$.

	Suppose the \textbf{if}-condition of \Cref{alg:updatecondition} is satisfied in the $i$-th iteration.
	By the induction hypothesis on $\Reach[c]$, which is at least as large as $i-c \ge 1$ at the beginning of the $i$-th iteration,
	$T[:c]$ is a primitive \ACover{} of $T[:i-l]$ for some $1 \leq l \leq c$.
	Then the algorithm updates the value of $\Reach[c]$ to $i \ge 1$, which is still positive at the end of the $i$-th iteration.
	By $T[:b] \in \Bord(T[:i])$ and $T[:c] \in \Cov(T[:b]) \subseteq \Bord(T[:b])$ (by \textbf{S}($i-1$)),
	\Cref{lem:bordertext} (1) implies $T[:c] \in \Bord(T[:i])$.
    Therefore, $T[:c]$ is a proper \ACover{} of $T[:i]$ by \Cref{lem:forreach}.
    Thus, $\Reach[c] = i$ satisfies the  invariant.
	On the other hand, the value $\Reach[i]$ is not changed from its initial value $0$, while we get $\SCover[i]=c$.
	Indeed $T[:i]$ is not primitive as it has a \ACover{} $T[:c]$.
	That is, $\Reach[i]$ and $\SCover[i]$ satisfy the invariants.
	Since $T[:c]$ is the unique primitive \ACover{} prefix of $T[:i]$, for other $j$, $\Reach[j]$ need not be updated.

	Suppose the \textbf{if}-condition is not satisfied in the $i$-th iteration, where both $\Reach[i]$ and $\SCover[i]$ are set to be $i$.
	If $b = 0$,
	$T[:i]$ has no proper \ACover{}.
	Thus $T[:i]$ is primitive and the lemma holds.
    Next, consider the case where $b \neq 0$ and $\Reach[c] < i - c$.
    To show by contradiction that $T[:i]$ is primitive, assume that $T[:i]$ has a primitive proper \ACover{} $T[:k]$.
    By $T[:k] \in \Cov(T[:i]) \subseteq \Bord(T[:i])$ and Lemma~\ref{lem:coverisbordercover}, we have $T[:k] \in \Cov(T[:b])$.
    Since $T[:b]$ has only one (up to $\approx$-equivalence) primitive \ACover{} by Lemma~\ref{lem:primitivecover}, we have $k = c$, i.e., $T[:c] \in \Cov(T[:i])$.
    By Lemma~\ref{lem:forreach}, $T[:c] \in \Cov(T[:i-j])$ for some $1 \le j \le c$,
     which contradicts the fact $\Reach[c] < i - c$ with the induction hypothesis.
    Therefore, $T[:i]$ has no primitive proper \ACover{} and thus $T[:i]$ is primitive by Lemma~\ref{lem:primitivecover}.
    We conclude that $\Reach[i]=\SCover[i]=i$ satisfies $\textbf{R}(i)$ and $\textbf{S}(i)$
    and $\Reach[j]$ need not be updated for other $j$.
\qed\end{proof}

\begin{corollary}
	If $\Border_T$ can be computed in $\beta(n)$ time,
	$\SCover_T$ can be computed in $O(\beta(n) + n)$ time.
\end{corollary}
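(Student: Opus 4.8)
The plan is to observe that this corollary is an immediate consequence of the preceding theorem, obtained by composing the two phases of the computation. First I would invoke the hypothesis to spend $\beta(n)$ time producing the \ABorder{} array $\Border_T$ of the fixed input string $T$. I would then feed this array to \Cref{alg:shortest_cover}; by the theorem just proved, the algorithm is correct and runs in $O(n)$ time when given $\Border_T$, so it outputs the shortest \ACover{} array $\SCover_T$ in $O(n)$ additional time. Summing the two costs gives a total running time of $\beta(n) + O(n) = O(\beta(n) + n)$, which is exactly the claimed bound.

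Because the argument is a pure composition of two already-established procedures, there is essentially no obstacle to overcome. The only point that warrants a remark is that the output of the border-array computation must coincide exactly with the input representation expected by \Cref{alg:shortest_cover}, so that stringing the two together incurs no conversion cost beyond $O(n)$; this holds because both stages operate on the same length-$n$ integer array encoding $\Border_T$. Hence the two time bounds add without interaction, and the corollary follows directly from the theorem.
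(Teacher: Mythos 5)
Your proposal is correct and matches the paper's (implicit) justification: the corollary is a direct composition of the $\beta(n)$-time border-array computation with the $O(n)$-time guarantee of the preceding theorem. Nothing further is needed.
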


\section{Longest \ACover{} array}\label{sec:longest}
This section discusses computing the longest \ACover{} array of a text.
Tables~\ref{table:example} and~\ref{table:example_parameter} show examples of longest \ACover{} arrays.
\begin{definition}[Longest \ACover{} array]\label{def:longestcoverarray}
	The \emph{longest \ACover{} array} $\LCover_T$ of\/ $T$ is an array of length $n$ such that for $1 \le i \le n$,
$\LCover_T[i]=\max(\{\,|C| \mid C \textup{ is a proper \ACover{} of } T[:i]\,\}\cup \{0\})$.
\end{definition}
Let $\LCover^0_T[i] = i$ and $\LCover^q_T[i] = \LCover_T[\LCover^{q-1}_T[i]]$ for $q \ge 1$.
The following lemma is a corollary to Lemmas~\ref{lem:shortcoverlong} and~\ref{lem:covercover}.
\begin{lemma}
	\label{lem:longestrecursive}
	For any $1 \le j \le i$,
	 $T[:j] \in \Cov(T[:i])$ iff $j = \LCover^q_T[i]$ for some $q \ge 0$.
\end{lemma}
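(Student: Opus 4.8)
The plan is to prove the two implications separately, after recording two structural facts that drive everything. First, by \Cref{lem:shortcoverlong} the prefixes of $T[:i]$ that are \ACovers{} of $T[:i]$ are nested by length: if $T[:j],T[:j'] \in \Cov(T[:i])$ with $j \le j'$, then $T[:j] \in \Cov(T[:j'])$. Second, $T[:i] \in \Cov(T[:i])$ always holds, which matches the base case $j = \LCover^0_T[i] = i$. I would also note that the sequence $\LCover^0_T[i] = i > \LCover^1_T[i] > \cdots$ is strictly decreasing as long as its entries stay positive, since $\LCover_T[\,\cdot\,]$ returns the length of a \emph{proper} \ACover{}.

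For the direction $(\Leftarrow)$ I would argue by induction on $q$ that $T[:\LCover^q_T[i]] \in \Cov(T[:i])$ whenever $\LCover^q_T[i] \ge 1$. The base case $q = 0$ is the trivial self-cover. For the step, write $j' = \LCover^q_T[i]$ and $j = \LCover^{q+1}_T[i] = \LCover_T[j']$; by \Cref{def:longestcoverarray}, $T[:j]$ is a proper \ACover{} of $T[:j']$, so $T[:j] \in \Cov(T[:j'])$, and since $T[:j'] \in \Cov(T[:i])$ by the induction hypothesis, \Cref{lem:covercover} yields $T[:j] \in \Cov(T[:i])$.

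The interesting direction is $(\Rightarrow)$: given $T[:j] \in \Cov(T[:i])$, I must locate $j$ inside the sequence $\{\LCover^q_T[i]\}_q$. The key is that the sequence cannot ``step over'' $j$. Since it decreases strictly from $i \ge j$ while positive, there is a least $q$ with $\LCover^q_T[i] \le j$, and I claim equality. If instead $\LCover^q_T[i] < j$, then $q \ge 1$ and $j' := \LCover^{q-1}_T[i] > j$. By the already-established $(\Leftarrow)$ direction, $T[:j'] \in \Cov(T[:i])$, so the nesting fact makes $T[:j]$ a \emph{proper} \ACover{} of $T[:j']$, whence $\LCover_T[j'] \ge j$; but $\LCover_T[j'] = \LCover^q_T[i] < j$, a contradiction. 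Hence $\LCover^q_T[i] = j$.

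I expect the main obstacle to be precisely this no-skipping argument: getting the ``least $q$'' bookkeeping right and making the contradiction rest on the maximality built into $\LCover_T$ (via \Cref{def:longestcoverarray}) combined with the nesting from \Cref{lem:shortcoverlong}. Everything else is an induction feeding on \Cref{lem:covercover}. I would also add a one-line check of the degenerate case where $T[:i]$ is primitive, so the sequence is just $\{i\}$ and the only \ACover{} of $T[:i]$ is itself, to confirm the boundary behaviour.
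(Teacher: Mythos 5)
Your proof is correct and follows exactly the route the paper intends: the paper states \Cref{lem:longestrecursive} without proof as "a corollary to Lemmas~\ref{lem:shortcoverlong} and~\ref{lem:covercover}", and your argument is precisely that --- \Cref{lem:covercover} drives the $(\Leftarrow)$ induction and the nesting from \Cref{lem:shortcoverlong} combined with the maximality in \Cref{def:longestcoverarray} gives the no-skipping argument for $(\Rightarrow)$. Nothing is missing; you have simply written out in full what the authors left implicit.
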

Therefore, using the longest \ACover{} array, one can easily obtain all the $\approx$-covers up to $\approx$-equivalence.

Li and Smyth~\cite{Li2002} presented an online linear-time algorithm to compute the longest $=$-cover array from the $=$-border array of a text $T$.
We will present a slight variant of theirs for computing the longest \ACover{} array.
Our modification is not due to the generalization.
In fact their algorithm works for computing \ACover{s} as it is.
We changed their algorithm just for simplicity.
We will briefly discuss the difference of their and our algorithms later.

Li and Smyth showed some properties of longest $=$-cover arrays, but not all of them hold under SCERs.
For instance, the longest \ACover{} array in Table~\ref{table:example_parameter} is a counterexample to Theorems~2.2 and~2.3 in~\cite{Li2002}.
So it is not trivial that their algorithm and our variant work under SCERs
and we need to carefully check the correctness of the algorithms.

Their algorithm involves an auxiliary array of length $n$ based on the notion of ``live'' prefixes.
A prefix $S$ of $T$ is said to be \emph{live} if $T$ can be extended so that $S$ will be a cover of $TU$ for some $U \in \Sigma^*$.
This notion is also known as ``left seeds''~\cite{Christou2012,Christou2013}.
We generalize the notion for SCERs as follows.
\begin{definition}[\ALeftSeed{}]\label{def:liveprefix}
	For strings $T$ of length $n$ and $S$ of length $m$,
	$S$ is said to be a \emph{\ALeftSeed{}} of $T$ if
	there exist $k$ and $l$ such that $k\le l < m$, $S \in \Cov(T[:n-k])$ and $S[:l] \approx T[n-l+1:]$.
	We denote by $\LSeed(T)$ the set of all \ALeftSeeds{} of $T$.
\end{definition}
We remark that it is not necessarily true that $\LSeed(T) = \{\,S \mid S \in \Cov(TU) $ for some $ U\,\}$ according to the above definition, contrarily to the case of the identity relation.
Consider the order-isomorphism $\OPapprox$ (Definition~\ref{def:oi}) on $\Sigma = \{{\tt a,b,c,d}\}$ with $\mtt{a} \prec \mtt{b} \prec \mtt{c} \prec \mtt{d}$.
Then $S = \mathtt{acb}$ is a left $\OPapprox$-seed of $T = \mathtt{adcbc}$, since $S \OPapprox T[:3]$ and $S[:2] \OPapprox T[4:]$.
However, for no character $U \in \Sigma$, we have $S \OPapprox (TU)[4:6]$, since $U$ needs to be a character bigger than $\mtt{b}$ and smaller than $\mtt{c}$.

Clearly $\Cov(T) \subseteq \LSeed(T)$. Moreover, $S \in \LSeed(T)$ implies $S \in \LSeed(T')$ for any prefix $T'$ of $T$ unless $|S| > |T'|$.
Being a \ALeftSeed{} is a weaker property than being an \ACover{}, but it is easier to handle in an online algorithm, due to the monotonicity that $T[:j] \notin \LSeed(T[:i-1])$ implies $T[:j] \notin \LSeed(T[:i])$ for every $j < i$.
The following series of lemmas investigate the relation among \ALeftSeed{s} and \ACover{s}.

\begin{lemma}\label{lem:CovLSeed}
If\/ $k \le l$, then $\Cov(T[:n-k]) \cap \LSeed(T[n-l+1:]) \subseteq \LSeed(T)$.
\end{lemma}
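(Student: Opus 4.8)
The plan is to unfold the definition of \ALeftSeed{} for the suffix $T[n-l+1:]$ and then reassemble a witnessing decomposition for $T$ itself. Write $m = |S|$ and set $T' = T[n-l+1:]$, so that $|T'| = l$. Since $S \in \LSeed(T')$, the definition of \ALeftSeed{} (with its bound variables renamed to $k',l'$ to avoid clashing with the given $k,l$) supplies $k' \le l' < m$ with $S \in \Cov(T'[:l-k'])$ and $S[:l'] \approx T'[l-l'+1:]$. Translating these back into positions of $T$, the second relation becomes $S[:l'] \approx T[n-l'+1:]$, which is already exactly the suffix-matching requirement for $S$ to be a \ALeftSeed{} of $T$ under the choice $\hat l = l'$; and the first becomes $S \in \Cov(T[n-l+1:n-k'])$. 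So the only remaining task is to exhibit a prefix $T[:n-\hat k]$ covered by $S$ with $\hat k \le l'$.

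I would then split on how the given $k$ compares with $l'$. If $k \le l'$, I am already done: the hypothesis $S \in \Cov(T[:n-k])$ together with $S[:l'] \approx T[n-l'+1:]$ witnesses $S \in \LSeed(T)$ via $\hat k = k$, $\hat l = l'$, since $k \le l' < m$. The interesting case is $k > l'$. There $k' \le l' < k$, so $k' < k$ and hence $T[:n-k]$ is a prefix of $T[:n-k']$. This lets me invoke the overlap lemma (\Cref{lem:cover_overlap}) on the string $W = T[:n-k']$: its prefix $W[:n-k] = T[:n-k]$ is covered by $S$ (the hypothesis), its suffix $W[n-l+1:] = T[n-l+1:n-k']$ is covered by $S$ (the translated first relation above), and the required overlap condition $(n-l+1)-1 \le n-k$ is precisely $k \le l$, which is assumed. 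The lemma then yields $S \in \Cov(T[:n-k'])$, and taking $\hat k = k'$, $\hat l = l'$ (with $k' \le l' < m$) witnesses $S \in \LSeed(T)$.

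The main thing to watch is the bookkeeping between the given indices $k,l$ and the existentially quantified $k',l'$ produced by the left-seed property of the suffix, in particular verifying that the index hypotheses $0 \le i-1 \le j \le |W|$ of \Cref{lem:cover_overlap} are met. With $i = n-l+1$ and $j = n-k$ they reduce to $l \le n$, to $k \le l$, and to $k' \le k$: the first is automatic, the second is the hypothesis, and the third is exactly what the case $k > l' \ge k'$ secures. I should also confirm that the extracted suffix $T[n-l+1:n-k']$ is non-degenerate, i.e. $k' \le l-1$, which follows from $k' \le l' < k \le l$. These index checks, rather than any deeper combinatorial work, are the only real obstacle, since the structural content of the statement is carried entirely by \Cref{lem:cover_overlap}.
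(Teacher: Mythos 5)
Your proof is correct and follows essentially the same route as the paper's: unfold the left $\approx$-seed witness $(k',l')$ for the suffix $T[n-l+1:]$, glue the two covers with \Cref{lem:cover_overlap} applied to $T[:n-k']$, and conclude with the witness $(k',l')$ for $T$. Your extra case split on $k \le l'$ versus $k > l'$ is a refinement rather than a different argument, and it in fact makes explicit the hypothesis $n-k \le n-k'$ (i.e.\ $k' \le k$) of \Cref{lem:cover_overlap}, which the paper's one-line application glosses over.
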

\begin{proof}
	Suppose $S \in \Cov(T[:n-k]) \cap \LSeed(T[n-l+1:])$.
	By $S \in \LSeed(T[n-l+1:])$, there are $k',l'$ such that $k' \le l' < |S| \le l$, $S \in \Cov(T[n-l+1:n-k'])$ and $S[:l'] \approx T[n-l'+1:]$.
	We have $S \in \Cov(T[:n-k'])$ by $S \in \Cov(T[:n-k]) $, $n-k \ge n-l$, and \Cref{lem:cover_overlap}.
	Hence $S \in \LSeed(T[:i])$ by \Cref{def:liveprefix}.
\qed
\end{proof}

Lemma~\ref{lem:clearlylseed} says somewhat long prefixes are all \ALeftSeed{s}, which we call \emph{primary}.
Lemma~\ref{lem:deadproperty} says shorter \ALeftSeed{s} are \ACover{s} of long \ALeftSeed{s}.
As a corollary, we obtain Lemma~\ref{lem:setdead}, which corresponds to Lemma~2.5 in~\cite{Li2002}.
\begin{lemma}[Primary \ALeftSeed{s}]
	\label{lem:clearlylseed}
	For any $1 \le i \le n$ and\/ $i-\Border_T[i] \le j \le i$,
    we have $T[:j] \in \LSeed(T[:i])$.
\end{lemma}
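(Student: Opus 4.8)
The plan is to prove the statement by strong induction on $i$, splitting on whether $j$ exceeds the longest border length $b := \Border_T[i]$. Throughout I will use the relation $T[:b] \approx T[i-b+1:i]$ that comes from the definition of $b$, and \Cref{lem:CovLSeed}, which is the natural vehicle for passing from a shorter prefix to $T[:i]$.

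First I would dispose of the case $b < j \le i$ directly from \Cref{def:liveprefix}, with no induction. Take the witnesses $k = i-j$ and $l = b$. Then $k \le l$ because $j \ge i-b$, and $l < j$ because $b < j$; the cover condition $T[:j] \in \Cov(T[:i-k]) = \Cov(T[:j])$ is trivial, and the overhang condition $T[:j][:l] = T[:b] \approx T[i-b+1:i]$ is exactly the border relation. This in particular settles every $i$ with $b = 0$ (where $j = i$ is forced), giving the base of the induction, and since the remaining case reduces only to the strictly shorter prefix $T[:b]$, the recursion will be well founded.

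For the case $1 \le j \le b$ — note $i-b \le j \le b$ forces $i \le 2b$ — I would reduce to $T[:b]$ via \Cref{lem:CovLSeed}. Applying that lemma to the text $T[:i]$ with $l = b$ and $k = i-j$ (legitimate since $i-j \le b$) and using $T[:j] \in \Cov(T[:j])$, it suffices to establish $T[:j] \in \LSeed(T[i-b+1:i])$. Here I would invoke the fact that $\LSeed$ is invariant under $\approx$ in its text argument: if $X \approx Y$ then $\Cov$ of corresponding prefixes coincide (equal $\approx$-occurrence sets, by \Cref{def:scer}) and the suffix-match condition transfers by \Cref{def:scer}, so $\LSeed(X) = \LSeed(Y)$. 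Since $T[i-b+1:i] \approx T[:b]$, the goal becomes $T[:j] \in \LSeed(T[:b])$, which is the same assertion for the strictly shorter string $T[:b]$ and the same $j$.

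To close the induction I would apply the induction hypothesis to $T[:b]$, which needs $j \in [\,b-\Border_T[b],\ b\,]$; the upper bound is the case assumption, and for $j=b$ no induction is needed at all, as $T[:b]\in\Cov(T[:b])\subseteq\LSeed(T[:b])$. The lower bound is the crux: I claim $b - \Border_T[b] \le i-b$, which together with $j \ge i-b$ yields $j \ge b-\Border_T[b]$. To prove the claim, restrict the border relation $T[:b] \approx T[i-b+1:i]$ to the first $2b-i$ positions using \Cref{def:scer}, obtaining $T[:2b-i] \approx T[i-b+1:b]$; this exhibits $T[:2b-i]$ as a proper \ABorder{} of $T[:b]$, whence $\Border_T[b] \ge 2b-i$ and $b-\Border_T[b] \le i-b$ (the boundary $i=2b$ being trivial). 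This inequality is the main obstacle: it is precisely what certifies that the prefix $T[:j]$ under consideration still lies in the admissible range after we descend from $T[:i]$ to its longest border $T[:b]$, so that the induction hypothesis applies. The auxiliary $\approx$-invariance of $\LSeed$ is routine but must be recorded, since without it the rewriting $\LSeed(T[i-b+1:i]) = \LSeed(T[:b])$ is unavailable.
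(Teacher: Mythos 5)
Your proof is correct, but it takes a genuinely different route from the paper's. The paper proves the lemma by a single explicit construction: from $T[:b]\approx T[i-b+1:i]$ (with $b=\Border_T[i]$) it derives a chain of \AOccurs{} of $T[:j]$ at the arithmetic progression of positions $x_k=k(i-b)+1$, shows these witness $T[:j]\in\Cov(T[:x_m+j-1])$ for $m=\lfloor(i-j)/(i-b)\rfloor$, and checks that the residual overhang of length $l=i-(m+1)(i-b)$ satisfies $T[:l]\approx T[i-l+1:i]$ with $l<j$, which is exactly \Cref{def:liveprefix}. You instead induct on $i$: the regime $j>b$ is immediate from the definition with witnesses $k=i-j$, $l=b$, and the regime $j\le b$ is pushed down to the statement for the shorter string $T[:b]$ via Lemma~\ref{lem:CovLSeed} together with the (correct, and correctly flagged as needing proof) invariance of $\LSeed$ under $\approx$ of the text, which follows from \Cref{def:scer} since $\approx$-occurrence sets of any pattern coincide in $\approx$-equivalent texts. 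The one nontrivial ingredient you need, the inequality $\Border_T[b]\ge 2b-i$ obtained by restricting the border relation to its first $2b-i$ positions, is exactly what keeps $j$ inside the admissible window after the descent, and your verification of it is sound; the base cases ($b=0$, $j=b$, $i=2b$) are all handled. In effect your recursion unrolls into the paper's occurrence chain one period $i-b$ at a time, so the two arguments buy the same content: the paper's version is self-contained and avoids the auxiliary invariance lemma at the cost of some index arithmetic, while yours trades that arithmetic for an induction plus two small supporting facts that would need to be stated explicitly.
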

\newcommand{\PROOFclearlyseed}{
\begin{proof}
    Let $b = \Border_T[i]$, $m = \lfloor (i-j)/(i-b) \rfloor$, $l = i - (m+1)(i-b)$ and $x_k = k(i-b)+1$ for $k \ge 0$.
	It is enough to show that (a) $\{x_0,\dots,x_m\}$ witnesses $T[:j] \in \Cov(T[:x_m+j-1])$,
	 (b) $T[:l] \approx T[i-l+1:i]$, and (c) $i-(x_m+j-1) \le l < j$.
	 The equation (c) can be verified by simple calculation.

	(a) Since $x_{k+1}-x_k = i-b \le j$, it is enough to show $x_k \in \OCC_{T[:j],T[:i]}$ for all $k \le m$.
	Since $T[:b] \approx T[i-b+1:i]$, any ``corresponding'' substrings of $T[1:b]$ and $T[i-b+1:i]$ are $\approx$-equivalent.
	In particular, $T[x_k:x_k+j-1] \approx T[x_k+i-b:x_k+i-b+j-1] = T[x_{k+1}:x_{k+1}+j-1]$ for all $0 \le k < m$.
	That is, $T[:j] \approx T[x_k:x_k+j-1]$ and thus $x_k \in \OCC_{T[:j],T[:i]}$ for all $0 \le k \le m$.

	(b) The same argument for corresponding substrings of $T[1:b]$ and $T[i-b+1:i]$ of length $l$ establishes
	 $T[:l] \approx T[x_m:x_m+l-1] \approx T[x_{m+1}:x_{m+1}+l-1] = T[i-l+1:i]$.
    \qed
\end{proof}
}
\PROOFclearlyseed{}


\begin{lemma}
	\label{lem:deadproperty}
	For any $1 \le i \le n$,
	$T[:j]$ for $1 \le j < i - \Border_T[i]$ is a \ALeftSeed{} of $T[:i]$ iff
    $T[:j]$ is the longest proper \ACover{} of a \ALeftSeed{} of $T[:i]$.
\end{lemma}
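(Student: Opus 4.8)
The plan is to funnel both directions through a single auxiliary claim: \emph{any \ACover{} of a \ALeftSeed{} of $T[:i]$ is again a \ALeftSeed{} of $T[:i]$}; formally, $T[:p]\in\LSeed(T[:i])$ and $T[:q]\in\Cov(T[:p])$ imply $T[:q]\in\LSeed(T[:i])$. Granting this claim, the direction $(\Longleftarrow)$ is immediate: if $T[:j]$ is the longest proper \ACover{} of a \ALeftSeed{} $T[:j']$ of $T[:i]$, then in particular $T[:j]\in\Cov(T[:j'])$ with $T[:j']\in\LSeed(T[:i])$, so the claim yields $T[:j]\in\LSeed(T[:i])$. Notably the hypothesis $j<i-\Border_T[i]$ is not needed for this half.

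To prove the claim, I fix a \ALeftSeed{} witness $k'\le l'<p$ for $T[:p]$, so $T[:p]\in\Cov(T[:i-k'])$ and $T[:l']\approx T[i-l'+1:i]$, and obtain $T[:q]\in\Cov(T[:i-k'])$ from \Cref{lem:covercover}. If $l'<q$, the pair $(k',l')$ already witnesses $T[:q]\in\LSeed(T[:i])$ directly. The substantive case is $q\le l'$, which I reduce to \Cref{lem:CovLSeed}: since $\approx$-equivalent texts have identical occurrence sets and hence identical $\Cov$ and $\LSeed$, and since $T[i-l'+1:i]\approx T[:l']$, it suffices to show $T[:q]\in\LSeed(T[:l'])$; then $T[:q]\in\Cov(T[:i-k'])\cap\LSeed(T[i-l'+1:i])$ with $k'\le l'$ closes the claim via \Cref{lem:CovLSeed}.

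Everything therefore rests on a small combinatorial sublemma — \emph{$C\in\Cov(W)$ with $|C|\le l\le|W|$ implies $C\in\LSeed(W[:l])$} — applied with $C=T[:q]$, $W=T[:p]$, $l=l'$ (legitimate since $q\le l'<p=|W|$). This is the step I expect to be the main obstacle, as it is the only place where I must reason below the level of the supplied lemmas. I will prove it by direct bookkeeping on $\OCC_{C,W}$, whose consecutive elements differ by at most $c=|C|$ (the cover subsequence has gaps $\le c$, and the full occurrence set is a superset). Letting $s^-$ be the last occurrence start at or before $l-c+1$ and $s^+$ the next occurrence, the occurrences through $s^-$ give $C\in\Cov(W[:s^-+c-1])$, so $\kappa=l-(s^-+c-1)\ge 0$; the overlap of the occurrence at $s^+$ with $W[:l]$ gives a \ABorder{} of $W[:l]$ of length $\lambda=l-s^++1<c$; and the gap bound $s^+\le s^-+c$ gives $\kappa\le\lambda$, so $(\kappa,\lambda)$ is the required \ALeftSeed{} witness (the degenerate $\kappa=0$ being covered by $\lambda=0$).

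Finally, for $(\Longrightarrow)$ I take a \ALeftSeed{} witness $k\le l<j$ for $T[:j]\in\LSeed(T[:i])$. Because $T[:l]$ is a proper \ABorder{} of $T[:i]$ we have $l\le\Border_T[i]$, hence $k\le l\le\Border_T[i]<i-j$ using $j<i-\Border_T[i]$, so $i-k>j$. Then $T[:i-k]$ is a \ALeftSeed{} of $T[:i]$ (witness $(k,l)$, self-covered) that is strictly longer than $j$ and covered by $T[:j]$, so the set $J=\{\,p\mid j<p\le i,\ T[:j]\in\Cov(T[:p]),\ T[:p]\in\LSeed(T[:i])\,\}$ is nonempty. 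Taking $j'=\min J$ and $q^{\ast}=\LCover_T[j']$, the proper cover $T[:j]$ forces $j\le q^{\ast}<j'$; and if $q^{\ast}>j$ then the claim makes $T[:q^{\ast}]$ a \ALeftSeed{} of $T[:i]$ while \Cref{lem:shortcoverlong} makes $T[:j]\in\Cov(T[:q^{\ast}])$, placing $q^{\ast}$ in $J$ below $j'$ and contradicting minimality. Hence $\LCover_T[j']=j$, exhibiting $T[:j]$ as the longest proper \ACover{} of the \ALeftSeed{} $T[:j']$; the hypothesis $j<i-\Border_T[i]$ is used precisely to guarantee $J\ne\emptyset$.
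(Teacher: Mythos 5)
Your proof is correct, and while its second half follows the paper's route, its overall decomposition and its first half are genuinely different. The paper handles $(\Longrightarrow)$ directly: from a witness $(k,l)$ with $k\le l<j$ it notes $j<i-\Border_T[i]\le i-k$, applies Lemma~\ref{lem:longestrecursive} to the cover chain of $T[:i-k]$ to name the prefix $T[:m]$ with $\LCover_T[m]=j$, and checks that the very same pair $(k,l)$ certifies $T[:m]\in\LSeed(T[:i])$; your extremal argument on the set $J$ arrives at the same conclusion but is longer, since it must additionally invoke your Claim and Lemma~\ref{lem:shortcoverlong} to exclude an intermediate cover below $\min J$. Your $(\Longleftarrow)$, routed through the Claim that every \ACover{} of a \ALeftSeed{} of $T[:i]$ is again a \ALeftSeed{} of $T[:i]$, is essentially the paper's argument (Lemma~\ref{lem:covercover}, a case split on whether $q$ exceeds the border witness $l'$, then Lemma~\ref{lem:CovLSeed}) stated at its natural level of generality. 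What your version buys is the explicit occurrence-bookkeeping sublemma that $C\in\Cov(W)$ with $|C|\le l\le|W|$ implies $C\in\LSeed(W[:l])$: at the corresponding step the paper silently appeals to its unproved prose remark that $S\in\LSeed(T)$ implies $S\in\LSeed(T')$ for any prefix $T'$ of $T$ with $|T'|\ge|S|$, whereas you actually prove the instance needed, so your write-up is more self-contained at the one place where reasoning below the level of the stated lemmas is unavoidable.
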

\newcommand{\PROOFdeadproperty}{
\begin{proof}
    Let $b = \Border_T[i]$.
    $(\Longrightarrow)$
    Assume that for $1 \le j < i-b$, $T[:j] \in \LSeed(T[:i])$, namely, there exist $k$ and $l$ such that $k \le l < j$, $T[:j] \in \Cov(T[:i-k])$ and $T[:l] \in \Bord(T[:i])$. 
    Since $T[:b]$ is the longest proper \ABorder{} of $T[:i]$,
    $k \le l \le b$ and $j < i-b \le i-k$.
    By \Cref{lem:longestrecursive}, there exists $T[:m] \in \Cov(T[:i-k])$ such that $j = \LCover_T[m]$.
    Moreover, since $j < m \le i-k$ and $k \le l < m$,
    we have $T[:m] \in \LSeed(T[:i])$. 
    Therefore $T[:j]$ is the longest proper \ACover{} of $T[:m]$, which is a \ALeftSeed{} of $T[:i]$.

    $(\Longleftarrow)$
    Assume there is a \ALeftSeed{} prefix $T[:m]$ of $T[:i]$ that is properly covered by $T[:j]$.
    By \Cref{def:liveprefix},
    there exist $k$ and $l$ such that $k \le l < m$, $T[:m] \in \Cov(T[:i-k])$ and $T[:l] \in \Bord(T[:i])$.
    Thus we have $T[:j] \in \Cov(T[:i-k])$ by \Cref{lem:covercover}.
    If $j \ge l$,
    $T[:j] \in \Cov(T[:i-k])$ and $T[:l] \in \Bord(T[:i])$, which implies $T[:j] \in \LSeed(T[:i])$ by \Cref{def:liveprefix}.
    If $j < l <m$,
	$T[:j] \in \Cov(T[:m]) \subseteq \LSeed(T[:m])$ implies $T[:j] \in \LSeed(T[:l])$.
	By Lemma~\ref{lem:CovLSeed}, $T[:j] \in \LSeed(T[:i])$.
\qed\end{proof}
}
\PROOFdeadproperty{}

\begin{lemma}\label{lem:setdead}
	For any $1 \le i \le n$ and  $1 \le j \le i$,
	$T[:j] \in \LSeed(T[:i])$
	 iff there exists $k$ such that
     $i-\Border_T[i] \le k \le i$ and $j=\LCover_T^q[k]$ for some $q \ge 0$.
\end{lemma}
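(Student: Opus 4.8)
The plan is to first rewrite the right-hand side in terms of covers. By \Cref{lem:longestrecursive}, for a fixed $k$ the set $\{\,j \mid j = \LCover_T^q[k] \textup{ for some } q \ge 0\,\}$ coincides with $\{\,j \mid T[:j] \in \Cov(T[:k])\,\}$. Hence the claim is equivalent to: $T[:j] \in \LSeed(T[:i])$ iff there is some $k$ with $i - \Border_T[i] \le k \le i$ and $T[:j] \in \Cov(T[:k])$. I will call a prefix $T[:k]$ with $i - \Border_T[i] \le k \le i$ \emph{primary}; by \Cref{lem:clearlylseed} every primary prefix is a \ALeftSeed{} of $T[:i]$.

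For the ($\Longleftarrow$) direction, suppose $T[:j] \in \Cov(T[:k])$ for some primary $k$. By \Cref{lem:longestrecursive} there is a strictly decreasing chain $k = k_0 > k_1 > \dots > k_q = j$ with $k_{t+1} = \LCover_T[k_t]$, so that $T[:k_{t+1}]$ is the longest proper \ACover{} of $T[:k_t]$. I argue that $T[:k_t] \in \LSeed(T[:i])$ by induction on $t$. The base case $T[:k_0] = T[:k]$ is immediate since $k$ is primary. For the step, assume $T[:k_t] \in \LSeed(T[:i])$. If $k_{t+1} \ge i - \Border_T[i]$, then $k_{t+1}$ is primary and $T[:k_{t+1}] \in \LSeed(T[:i])$ by \Cref{lem:clearlylseed}; otherwise $k_{t+1} < i - \Border_T[i]$ and $T[:k_{t+1}]$ is the longest proper \ACover{} of the \ALeftSeed{} $T[:k_t]$, so \Cref{lem:deadproperty} yields $T[:k_{t+1}] \in \LSeed(T[:i])$. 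Taking $t = q$ gives $T[:j] \in \LSeed(T[:i])$.

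For the ($\Longrightarrow$) direction, suppose $T[:j] \in \LSeed(T[:i])$; I proceed by strong induction, assuming the claim for all indices larger than $j$. If $j \ge i - \Border_T[i]$, then $j$ is primary and $k = j$, $q = 0$ works. If $j < i - \Border_T[i]$, then \Cref{lem:deadproperty} provides a \ALeftSeed{} $T[:m]$ of $T[:i]$ with $j < m$ whose longest proper \ACover{} is $T[:j]$, i.e.\ $j = \LCover_T[m]$. Since $m > j$, the induction hypothesis gives a primary $k$ with $m = \LCover_T^{q'}[k]$, whence $j = \LCover_T^{q'+1}[k]$, as desired. (The base of this downward recursion is always reached, since $j = i$ falls into the first case because $\Border_T[i] \ge 0$.)

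I expect the main obstacle to be the ($\Longleftarrow$) chain argument: one must verify that the ``longest proper cover'' hypothesis of \Cref{lem:deadproperty} is satisfied at every step of the chain, and correctly handle the transition from indices in the primary range $[\,i - \Border_T[i],\, i\,]$ to those below it. The per-step case distinction on whether $k_{t+1}$ lies in the primary range is what lets the induction go through cleanly, routing each link of the chain either through \Cref{lem:clearlylseed} or through \Cref{lem:deadproperty} without having to track the exact point of transition.
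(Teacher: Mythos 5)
Your proposal is correct and follows exactly the route the paper intends: its proof of this lemma is just the citation ``By Lemmas~\ref{lem:longestrecursive}, \ref{lem:clearlylseed} and~\ref{lem:deadproperty},'' and your argument is precisely the expansion of that citation --- translating $\LCover_T^q[k]$ into $\Cov(T[:k])$ via Lemma~\ref{lem:longestrecursive}, walking the cover chain downward with Lemma~\ref{lem:deadproperty}($\Longleftarrow$) for the if-direction, and recursing upward with Lemma~\ref{lem:deadproperty}($\Longrightarrow$) until the primary range of Lemma~\ref{lem:clearlylseed} is reached for the only-if direction. The case split on whether each chain index lies in $[\,i-\Border_T[i],\,i\,]$ is handled correctly, so nothing is missing.
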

\begin{proof}
By Lemmas~\ref{lem:longestrecursive}, \ref{lem:clearlylseed} and~\ref{lem:deadproperty}.
\qed\end{proof}

Our algorithm involves an auxiliary array based on the following function $\LLive_T$, which is updated by Lemma~\ref{lem:updatellive+}.
The significance of this function is shown as Lemma~\ref{lem:updatelongestcover}.
\begin{definition}[$\LLive_T(i,j)$]\label{def:longestlivecover}
	For a string $T$,
    define
	 \[
	 \LLive_T(i,j)=
     \max (\{\,l \mid T[:l] \in \LSeed(T[:i]) \cap \Cov(T[:j])\,\} \cup \{0\}).
	 \]
\end{definition}
\begin{lemma}\label{lem:updatelongestcover}
	For any $1 \le i \le n$,
	$\LCover_T[i] = \LLive_T(i,\Border_T[i])$.
\end{lemma}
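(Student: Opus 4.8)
The plan is to prove the two inequalities $\LCover_T[i] \le \LLive_T(i,b)$ and $\LLive_T(i,b) \le \LCover_T[i]$ separately, writing $b = \Border_T[i]$ throughout. A preliminary observation I would record first is that any prefix $T[:l] \in \Cov(T[:b])$ satisfies $l \le b < i$, so every candidate length counted by $\LLive_T(i,b)$ is strictly below $i$; this is what will let me conclude ``properness'' in both directions with no extra work.

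For $\LCover_T[i] \le \LLive_T(i,b)$, I would take the longest proper \ACover{} $C = T[:c]$ of $T[:i]$, the case $c = 0$ being trivial. Since $\Cov(T[:i]) \subseteq \Bord(T[:i])$ and $C$ is proper, $c$ is at most the length $b$ of the longest proper \ABorder{}. Then $T[:c] \in \Cov(T[:i]) \subseteq \LSeed(T[:i])$, and \Cref{lem:coverisbordercover} applied to $C = T[:c]$ and $B = T[:b]$ (with $c \le b$) yields $T[:c] \in \Cov(T[:b])$. Hence $T[:c]$ is one of the prefixes counted in \Cref{def:longestlivecover}, giving $\LLive_T(i,b) \ge c = \LCover_T[i]$.

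For the reverse, set $l = \LLive_T(i,b)$ (again $l = 0$ is trivial) and take a witness $T[:l] \in \LSeed(T[:i]) \cap \Cov(T[:b])$; by the preliminary observation $l \le b < i$. I would show $T[:l]$ is a proper \ACover{} of $T[:i]$ using \Cref{lem:forreach}, which asks for two things. First, $T[:l] \in \Bord(T[:i])$: this follows from $T[:l] \in \Cov(T[:b]) \subseteq \Bord(T[:b])$ together with $T[:b] \in \Bord(T[:i])$ via \Cref{lem:bordertext}(1). Second, $T[:l] \in \Cov(T[:i-s])$ for some $1 \le s \le l$: unfolding $T[:l] \in \LSeed(T[:i])$ from \Cref{def:liveprefix} gives indices $k \le l' < l$ with $T[:l] \in \Cov(T[:i-k])$, so if $k \ge 1$ then $1 \le k \le l$ supplies the required $s$, while if $k = 0$ then $T[:l] \in \Cov(T[:i])$ already and $l < i$ makes it proper. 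Either way \Cref{lem:forreach} certifies $T[:l]$ as a proper \ACover{} of $T[:i]$, so $\LCover_T[i] \ge l = \LLive_T(i,b)$.

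I expect the reverse inequality to be the delicate part: the left-seed hypothesis only guarantees that a prefix of $T[:l]$ matches a suffix of $T[:i]$, so $T[:l]$ is not automatically a border of $T[:i]$, and it is precisely the extra membership $T[:l] \in \Cov(T[:b])$ that upgrades it to a genuine \ABorder{} before \Cref{lem:forreach} can be applied. Keeping track of the two roles played by $\Cov(T[:b])$ -- contributing the full-length border via \Cref{lem:bordertext} and forcing $l \le b < i$ for properness -- alongside the seed definition contributing the ``reach to the right end'' is the crux; the index arithmetic $k \le l' < l \le b < i$ is otherwise routine.
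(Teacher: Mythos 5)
Your proof is correct, and its overall shape---two inequalities between $\LCover_T[i]$ and $\LLive_T(i,b)$, i.e.\ the two inclusions between $\Cov(T[:i])\setminus[T[:i]]_\approx$ and $\LSeed(T[:i])\cap\Cov(T[:b])$---matches the paper's. The forward direction is essentially identical; if anything yours is the more careful version, since you explicitly invoke \Cref{lem:coverisbordercover} to land the longest proper \ACover{} in $\Cov(T[:b])$, whereas the paper's text only states the weaker conclusion $C\in\Bord(T[:b])$. The genuine divergence is in the reverse direction. The paper takes the witness $S\in\LSeed(T[:i])\cap\Cov(T[:b])$, reads $S\in\Cov(T[:b])$ as $S\in\Cov(T[i-b+1:i])$ via the border, and glues this with $S\in\Cov(T[:i-k])$ by the overlap lemma (\Cref{lem:cover_overlap}), needing only $k<|S|\le b$. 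You instead route through the characterization of proper \ACovers{} in \Cref{lem:forreach}, which obliges you to (i) establish $T[:l]\in\Bord(T[:i])$ separately via \Cref{lem:bordertext}~(1), and (ii) split on whether the seed's offset $k$ is zero, because \Cref{lem:forreach} requires $1\le s\le l$. Both arguments are sound and rest on lemmas the paper has already stated; the overlap-gluing is a line shorter and avoids the case split, while your version makes explicit which hypothesis is consumed where ($\Cov(T[:b])$ supplying the border, the left-seed property supplying the reach to position $i$), which you correctly identify as the crux.
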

\begin{proof}
It suffices to show
\(
	\Cov(T[:i]) \setminus [T[:i]]_\approx = \LSeed(T[:i]) \cap \Cov(T[:b])
\)
for $b= \Border_T[i]$.
If $C \in \Cov(T[:i])$ with $|C| \neq i$, then obviously $C \in \Bord(T[:i]) \cap \LSeed(T[:i]) $.
By Lemma~\ref{lem:bordertext}, $C \in \Bord(T[:b])$.
Suppose $S \in \LSeed(T[:i]) \cap \Cov(T[:b])$.
There is $k < |S|$ such that $S \in \Cov(T[:i-k])$.
By $k < |S| \le b$ and Lemma~\ref{lem:cover_overlap}, we have $S \in \Cov(T[:i]) $.
\qed\end{proof}
%
%
%


\begin{lemma}\label{lem:updatellive+}
	\(
		\LLive_T(i,j) = \LLive_T(i-1,j)
	\) for $1 \le j \le \Border_T[i]$.
	Moreover, for $j = \Border_T[i]$, if\/ $T[:j] \notin \LSeed(T[:{i-1}])$, then
	\(
		\LLive_T(i,j) = \LLive_T(i-1,\LCover[j])
	\).
\end{lemma}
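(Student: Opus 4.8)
The plan is to reduce both equalities to the characterization of \ALeftSeeds{} given by \Cref{lem:setdead}: combined with \Cref{lem:longestrecursive} it states that for any $l\le i$, $T[:l]\in\LSeed(T[:i])$ iff $T[:l]\in\Cov(T[:k])$ for some $k$ with $i-\Border_T[i]\le k\le i$. Write $b=\Border_T[i]$, and note $b<i$. For the first equation, since the factor $\Cov(T[:j])$ appears on both sides of the definition of $\LLive_T$, it suffices to prove that for every $l$ with $T[:l]\in\Cov(T[:j])$ (so that $l\le j\le b$) one has $T[:l]\in\LSeed(T[:i])$ iff $T[:l]\in\LSeed(T[:i-1])$; equality of these two defining sets then gives equality of their maxima.

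The forward implication ($i\Rightarrow i-1$) is immediate from the downward monotonicity of \ALeftSeeds{} noted after \Cref{def:liveprefix}, since $l\le b<i$ means $l\le|T[:i-1]|$. The backward implication ($i-1\Rightarrow i$) is the step I expect to be the crux. Here I would unfold \Cref{def:liveprefix} for $T[:l]\in\LSeed(T[:i-1])$ to obtain $k_1,l_1$ with $k_1\le l_1<l$ and $T[:l]\in\Cov(T[:(i-1)-k_1])$. The point is that the reached position $(i-1)-k_1$ automatically lands in the window demanded by \Cref{lem:setdead}: from $k_1<l\le b$ we get $(i-1)-k_1\ge i-l\ge i-b$, while clearly $(i-1)-k_1\le i-1<i$. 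Thus $T[:l]\in\Cov(T[:(i-1)-k_1])$ with $i-b\le (i-1)-k_1\le i$, so \Cref{lem:setdead} (via \Cref{lem:longestrecursive}) yields $T[:l]\in\LSeed(T[:i])$. The only feature of $\Cov(T[:j])$ used is the length bound $l\le b$ it supplies, and recognizing that this bound is exactly what forces the witness into the $[i-b,i]$ window is the one nonroutine observation.

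For the ``moreover'' part, write $b''=\LCover_T[b]$ and assume $T[:b]\notin\LSeed(T[:i-1])$. Since $b<i$, the same downward monotonicity gives $T[:b]\notin\LSeed(T[:i])$. By \Cref{lem:longestrecursive} the lengths $l$ with $T[:l]\in\Cov(T[:b])$ are exactly $\{\LCover_T^r[b]\mid r\ge 0\}$, whose maximum ($r=0$) is $b$ itself; removing it leaves $\{\LCover_T^r[b]\mid r\ge 1\}$, which are precisely the lengths $l$ with $T[:l]\in\Cov(T[:b''])$. Because $T[:b]$ is now excluded from the set defining $\LLive_T(i,b)$, this identity of length sets gives $\LLive_T(i,b)=\LLive_T(i,b'')$. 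Finally $b''\le b$, so the first part of the lemma applies with $j=b''$ and yields $\LLive_T(i,b'')=\LLive_T(i-1,b'')=\LLive_T(i-1,\LCover_T[b])$, completing the chain; the degenerate case $b''=0$ (that is, $T[:b]$ primitive) is handled by reading $\LLive_T(\cdot,0)=0$.
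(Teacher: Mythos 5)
Your proof is correct, and its skeleton matches the paper's: the first claim is reduced to showing that membership in $\LSeed(T[:i])$ and in $\LSeed(T[:i-1])$ coincide on the set $\{\,l \mid T[:l]\in\Cov(T[:j])\,\}$, with one inclusion given by the downward monotonicity of \ALeftSeeds{}, and the ``moreover'' clause is obtained by peeling $b=\Border_T[i]$ off the cover chain via $\Cov(T[:b])=\Cov(T[:\LCover_T[b]])\cup[T[:b]]_\approx$ (you peel at level $i$ and then invoke the first claim at $\LCover_T[b]$, while the paper peels at level $i-1$ after invoking it at $b$; this difference is immaterial). The genuine divergence is in the crux, the implication from $T[:l]\in\LSeed(T[:i-1])$ to $T[:l]\in\LSeed(T[:i])$. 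The paper argues from \Cref{def:liveprefix} directly: it notes $T[:l]\in\Cov(T[:i-1-k])\cap\LSeed(T[i-b+1:i])$ and applies the overlap lemma (\Cref{lem:CovLSeed}). You instead invoke the cover-tree characterization of \Cref{lem:setdead} (combined with \Cref{lem:longestrecursive}) and observe that $k_1<l\le b$ already forces the covered prefix $T[:(i-1)-k_1]$ into the window $[i-b,i]$. This is sound --- \Cref{lem:setdead} is established before and independently of the present lemma, so there is no circularity --- and it is slightly shorter, since you never need to exhibit the pair $(k,l)$ demanded by \Cref{def:liveprefix} for $T[:i]$ itself; the paper's route stays closer to the definition and uses only the lighter \Cref{lem:CovLSeed}, which it has already proved for exactly this purpose.
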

\begin{proof}
	Let $l=\LLive_T(i-1,j)$ and $l'=\LLive_T(i,j)$.
	Since $j \le \Border_T[i] < i$, we have $l' < i$, which implies $l' \le l$.

	Suppose $l=0$. This implies $l'=0$ and thus $l'=l$ holds.
	Suppose in addition that $j = \Border_T[i]$ and $T[:j] \notin \LSeed(T[:{i-1}])$.
	The fact $l=0$ means $\LSeed(T[:i-1]) \cap \Cov(T[:j]) = \emptyset$, which implies $\LSeed(T[:i-1]) \cap \Cov(T[:LCover_T[j]]) = \emptyset$ by Lemmas~\ref{lem:shortcoverlong} and~\ref{lem:covercover}.
	Therefore, $\LLive_T(i-1,\linebreak[2]\LCover_T[j]) = 0$.
	So the lemma holds.

	Hereafter we assume $l \ge 1$. Let $b_i=\Border_T[i] \ge 1$.
	By $T[:l] \in \LSeed(T[:i-1])$, there exists $k < l$ such that $T[:l] \in \Cov(T[:i-1-k])$.
	On the other hand, by $b_i \le i-1$, $T[:l] \in \LSeed(T[:b_i])=\LSeed(T[i-b_i+1:i])$.
	Since $k < l \le j \le b_i$, by Lemma~\ref{lem:CovLSeed}, $T[:l] \in \Cov(T[:i-1-k]) \cap \LSeed(T[i-b_i+1:i])$ implies $l \in \LSeed(T[:i])$.
	Thus $l'=l$.

	Suppose $j = b_i$ and $T[:j] \notin \LSeed(T[:{i-1}])$.
	Since $\Cov(T[:j]) = \Cov(T[:\LCover_T[j]]) \cup [T[:j]]_{\approx}$ by Lemmas~\ref{lem:shortcoverlong} and~\ref{lem:covercover},
	$T[:j] \notin \LSeed(T[:{i-1}])$ implies $\LLive_T(i-1,\LCover_T[j]) = l = l'$.
\qed\end{proof}

\begin{algorithm2e}[t]
	\caption{Algorithm computing the longest \ACover{} array} 
	\label{alg:lcover}
	\SetVlineSkip{0.5mm}
	let $\Border$ be the \ABorder{} array of $T$\;
	$\LiveChildren[i] \leftarrow 0$,
	$\LargestLive[i] \leftarrow i$ for $0 \le i \le n$\;
	\For{$1 \le i \le n$}{
		\If{$\LiveChildren[\Border[i]]=0$ {\bf and} $0 < 2 \cdot \Border[i] < i$\label{algln:ifdead}}{
			$\LargestLive[\Border[i]] \leftarrow \LargestLive[\LCover[\Border[i]]]$\;\label{alg:longest_LargeLive}
		}
		$\LCover[i] \leftarrow \LargestLive[\Border[i]]$\;\label{algln:lcover}
		$\LiveChildren[\LCover[i]] \leftarrow \LiveChildren[\LCover[i]] + 1$\;\label{alg:livechildren++}
		\If{$i > 1$}{
			$c_1 \leftarrow i-\Border[i]$\;
			$c_2 \leftarrow (i-1) - \Border[i-1]$\;
            \For{$j$ \textbf{\textup{from}} $c_2$ \textbf{\textup{to}} $c_1-1$\label{algln:inner_for}}{
    			\While{$\LiveChildren[j] = 0$\label{alg:dead_condition}}{
					$\LiveChildren[\LCover[j]] \leftarrow \LiveChildren[\LCover[j]] - 1$\;\label{alg:livechildren--}
					$j \leftarrow \LCover[j]$\;
				}
            }
		}
	}
\end{algorithm2e}
%

\Cref{alg:lcover} computes the longest \ACover{} array $\LCover_T$ of $T$ as $\LCover$ taking the \ABorder{} array $\Border_T$ as input.
Following Li and Smyth~\cite{Li2002}, we explain the algorithm using a tree formed by $\LCover_T$, called the \emph{$\approx$-cover tree}.
The $\approx$-cover tree consists of nodes $0,\dots,n$.
The root is $0$ and the parent of $j \neq 0$ is $\LCover_T[j]$.
By Lemma~\ref{lem:longestrecursive}, $T[:k] \in \Cov(T[:j])$ if and only if $k \neq 0$ and $k$ is an ancestor of $j$ (including the case where $k=j$) in the $\approx$-cover tree.
Hereafter, we casually use the index $j$ to mean (any string $\approx$-equivalent to) the prefix $T[:j]$ of $T$, if no confusion arises.
We use two additional arrays $\LiveChildren$ and $\LargestLive$, which have zero-based indices in accordance with the $\approx$-cover tree's nodes.
$\LiveChildren[j]$ counts the number of children of $j$ that are \ALeftSeed{s} of $T$.
$\LargestLive[j]$ points at the lowest ancestor of $j$ that is a \ALeftSeed{} of $T$.
More precisely, the algorithm maintains them so that they satisfy the following invariants at the end of the $i$-th iteration of the outer \textbf{for} loop.
\begin{enumerate}
   \item $\LargestLive[j] = j$ if $\LiveChildren[\Border_T[j]] > 0$ or $\Border_T[j] \ge i-\Border_T[i]$ or $\Border_T[j]=0$ for $0 \le j \le n$.\label{inv:LL1}
   \item $\LargestLive[j] = \LLive_T(i,j)$ for $0 \le j \le \Border_T[i]$.\label{inv:LL2}
   \item $\LCover[j]=\LCover_T[j]$ for $1 \le j \le i$.\label{inv:Cov}
   \item $\LiveChildren[j] = |\LSC(i,j)|$,
    where
\[
	\LSC(i,j) = \{\, k \mid T[:k] \in \LSeed(T[:i]) \text{ and } j = \LCover_T[k] \,\}
\,,\]
	for $0 \le j \le n$.
	Note that $\LSC(i,j) = \emptyset$ for $j \ge i$.\label{inv:Child}
\end{enumerate}
Suppose we already have the $\approx$-cover tree for $T[:i-1]$.
To update it for $T[:i]$ by adding a node $i$, we must determine the parent $\LCover_T[i]$ of $i$.
By Lemma~\ref{lem:updatelongestcover} and the invariant, we know that $\LCover_T[i] = \LargestLive[\Border_T[i]]$.
The array $\LargestLive$ can be maintained by Lemma~\ref{lem:updatellive+}, where we must update $\LargestLive[j]$ when $T[:j] \notin \LSeed(T[:{i-1}])$ for $j = \Border_T[i] > \Border_T[i-1]$.
By Lemma~\ref{lem:setdead}, $T[:j] \in \LSeed(T[:{i-1}])$ iff $i-1 - \Border_T[i-1] \le j \le i-1$ or $\LiveChildren[j] > 0$ assuming that $\LiveChildren$ satisfies the invariant for $i-1$.
Therefore, constructing the $\approx$-cover tree is reduced to maintaining the array $\LiveChildren$.
By Lemma~\ref{lem:setdead}, $\LiveChildren[j]$ counts the number of children of $j$ that are ancestors of an element of the set $P_{i}=\{\, k \mid i-\Border_T[i] \le k \le i\,\}$, which is the index range of primary \ALeftSeed{s}.
At the beginning of the $i$-th iteration, $\LiveChildren$ is based on $P_{i-1}$, and we must update $\LiveChildren$ to be based on $P_i$ by the end of the $i$-th iteration.
 $\LiveChildren[j]$ needs to be updated only when $j$ is an ancestor of some $k$ in the difference of $P_{i-1}$ and $P_i$.
So, we first increment the value $\LiveChildren[\LCover[i]]$ by one as $\LCover[i]$ has got a new child $i \in P_i \setminus P_{i-1}$.
Since $\LCover[i]$ is a \ALeftSeed{} of $T[:i-1]$, we need not increment $\LiveChildren[j]$ for further ancestors $j$ of $\LCover[i]$.
For those $k \in P_{i-1} \setminus P_{i}$, we decrement $\LiveChildren[\LCover[k]]$ unless $k$ is an ancestor of $P_i$.
If this results in $\LiveChildren[\LCover[k]]=0$, we recursively decrement $\LiveChildren[\LCover^2[k]]$, and so on.

\begin{example}
We consider the parameterized-equivalence $\PRapprox$ (Definition~\ref{def:pr}) as an SCER.
Suppose we have computed the $\PRapprox$-cover tree for $T[:5]=\mtt{abcac}$ as shown in Figure~\ref{fig:covertree}~(a).
Our goal is to obtain the one for $T[:6]=\mtt{abcacc}$ shown in Figure~\ref{fig:covertree}~(d).
Since $\LargestLive[j]=j$ for all $j$ throughout this example, we focus on updating $\LCover$ and $\LiveChildren$.
In the figures,  $\LiveChildren$ is shown in parentheses beside each node.
 We have $\Border_T[5]=2$ and $\Border_T[6]=1$,
 so the index sets of the primary left $\PRapprox$-seeds of $T[:5]$ and $T[:6]$ are $P_5=\{3,4,5\}$ and $P_6 = \{5,6\}$, respectively.
 Since $\Border_T[6]=1$, Algorithm~\ref{alg:lcover} first lets $\LCover[6] = \LargestLive[\Border[6]] = 1$.
 In other words, a new node $6$ is added as a child of $1$.
It remains to update $\LiveChildren$, which is now based on $P_5=\{3,4,5\}$ but shall be based on $P_6=\{5,6\}$.
First we increment $\LiveChildren[\LCover[6]]=\LiveChildren[1]$ by one, as illustrated in \Cref{fig:covertree} (b).
 At this moment, $\LiveChildren[j]$ counts the number of children of $j$ which are ancestors of some of $P_5 \cup P_6=\{3,4,5,6\}$.
The inner \textbf{for}-loop of \Cref{algln:inner_for} modifies $\LiveChildren$ so that it shall be based on $\{4,5,6\}$ first and then on $\{5,6\}$.
Since the node $3$ is the parent of $4$, the $\LiveChildren$ arrays based on $\{3,4,5,6\}$ and $\{4,5,6\}$ are identical, as shown in Figures~\ref{fig:covertree}~(b) and~(c), respectively.
To modify $\LiveChildren$ to be based on $\{5,6\}$, we decrement $\LiveChildren[j]$ if $j$ has a child which is an ancestor of $4$ but not that of $5$ or $6$.
Since the node 4 is such a child of $\LCover[4]=3$ (4 is an ancestor of 4, and $\LiveChildren[4]=0$ means that $4$ is not an ancestor of 5 or 6), so we decrement $\LiveChildren[3]$ by one.
This results in $\LiveChildren[3]=0$, by which we know that the node $3$ is an ancestor of 4 but not that of 5 or 6.
Hence we decrement $\LiveChildren[\LCover[3]] = \LiveChildren[2]$.
This results in $\LiveChildren[2]=1$, which means that the node $2$ is an ancestor of 4 and that of 5 or 6 at the same time.
So, we stop the recursion and obtain the $\approx$-cover tree with $\LiveChildren$ based on $P_6=\{5,6\}$, as shown in \Cref{fig:covertree}~(d).
\begin{figure}[t]\centering
	\newcommand{\anch}[2]{(#2)}
	\newcommand{\bl}[1]{\textcolor{red}{\bf{#1}}}
	\tikzstyle{lseed}=[draw=red,ultra thick]
	\begin{tikzpicture}[scale=0.5, every node/.style=circle,thick,fill=white,inner sep=0pt,minimum size=4mm]\small
	\draw(1.5, -1.5) node{(a)};
	\node[draw] (0) at (1.6,5.2) [label={[rectangle]east:\anch{}{1}}] {0};
	\node[draw] (1) at (1.6,3.9) [label={[rectangle]east:\anch{}{1}}] {1};
	\node[draw] (2) at (1.6,2.6) [label={[rectangle]east:\anch{}{2}}] {2};
	\node[lseed] (3) at (0.8,1.3) [label={[rectangle]east:\anch{}{1}}] {\bl{3}};
	\node[lseed] (4) at (0,0) [label={[rectangle]east:\anch{}{0}}] {\bl{4}};
	\node[lseed] (5) at (2.8,1.3) [label={[rectangle]east:\anch{}{0}}] {\bl{5}};
	\draw[-,red,ultra thick] (1) to (0);
	\draw[-,red,ultra thick] (2) to (1);
	\draw[-,red,ultra thick] (3) to (2);
	\draw[-,red,ultra thick] (4) to (3);
	\draw[-,red,ultra thick] (5) to (2);
	\end{tikzpicture}
\quad
	\begin{tikzpicture}[scale=0.5, every node/.style=circle,thick,fill=white,inner sep=0pt,minimum size=4mm]\small
	\draw(1.8, -1.5) node{(b)};
	\node[draw] (0) at (2.1,5.2) [label={[rectangle]east:\anch{}{1}}] {0};
	\node[draw] (1) at (2.1,3.9) [label={[rectangle]east:\anch{}{2}}] {1};
	\node[draw] (2) at (1.4,2.6) [label={[rectangle]east:\anch{}{2}}] {2};
	\node[lseed] (3) at (0.7,1.3) [label={[rectangle]east:\anch{}{1}}] {\bl{3}};
	\node[lseed] (4) at (0,0) [label={[rectangle]east:\anch{}{0}}] {\bl{4}};
	\node[lseed] (5) at (2.8,1.3) [label={[rectangle]east:\anch{}{0}}] {\bl{5}};
	\node[lseed] (6) at (3.5,2.6) [label={[rectangle]east:\anch{}{0}}] {\bl{6}};
	\draw[-,red,ultra thick] (1) to (0);
	\draw[-,red,ultra thick] (2) to (1);
	\draw[-,red,ultra thick] (3) to (2);
	\draw[-,red,ultra thick] (4) to (3);
	\draw[-,red,ultra thick] (5) to (2);
	\draw[-,red,ultra thick] (6) to (1);
	\end{tikzpicture}
\quad
	\begin{tikzpicture}[scale=0.5, every node/.style=circle,thick,fill=white,inner sep=0pt,minimum size=4mm]\small
	\draw(1.8, -1.5) node{(c)};
	\node[draw] (0) at (2.1,5.2) [label={[rectangle]east:\anch{}{1}}] {0};
	\node[draw] (1) at (2.1,3.9) [label={[rectangle]east:\anch{}{2}}] {1};
	\node[draw] (2) at (1.4,2.6) [label={[rectangle]east:\anch{}{2}}] {2};
	\node[draw] (3) at (0.7,1.3) [label={[rectangle]east:\anch{}{1}}] {3};
	\node[lseed] (4) at (0,0) [label={[rectangle]east:\anch{}{0}}] {\bl{4}};
	\node[lseed] (5) at (2.8,1.3) [label={[rectangle]east:\anch{}{0}}] {\bl{5}};
	\node[lseed] (6) at (3.5,2.6) [label={[rectangle]east:\anch{}{0}}] {\bl{6}};
	\draw[-,red,ultra thick] (1) to (0);
	\draw[-,red,ultra thick] (2) to (1);
	\draw[-,red,ultra thick] (3) to (2);
	\draw[-,red,ultra thick] (4) to (3);
	\draw[-,red,ultra thick] (5) to (2);
	\draw[-,red,ultra thick] (6) to (1);
	\end{tikzpicture}
\quad
	\begin{tikzpicture}[scale=0.5, every node/.style=circle,thick,fill=white,inner sep=0pt,minimum size=4mm]\small
	\draw(1.8, -1.5) node{(d)};
	\node[draw] (0) at (2.1,5.2) [label={[rectangle]east:\anch{}{1}}] {0};
	\node[draw] (1) at (2.1,3.9) [label={[rectangle]east:\anch{}{2}}] {1};
	\node[draw] (2) at (1.4,2.6) [label={[rectangle]east:\anch{}{1}}] {2};
	\node[draw] (3) at (0.7,1.3) [label={[rectangle]east:\anch{}{0}}] {3};
	\node[draw] (4) at (0,0) [label={[rectangle]east:\anch{}{0}}] {4};
	\node[lseed] (5) at (2.8,1.3) [label={[rectangle]east:\anch{}{0}}] {\bl{5}};
	\node[lseed] (6) at (3.5,2.6) [label={[rectangle]east:\anch{}{0}}] {\bl{6}};
	\draw[-,red,ultra thick] (1) to (0);
	\draw[-,red,ultra thick] (2) to (1);
	\draw[-] (3) to (2);
	\draw[-] (4) to (3);
	\draw[-,red,ultra thick] (5) to (2);
	\draw[-,red,ultra thick] (6) to (1);
	\end{tikzpicture}
	\caption{
	Updating the $\PRapprox$-cover tree of $T[:5]=\mtt{abcac}$ (a) for that of $T[:6]=\mtt{abcacc}$ (d).
	$\LiveChildren$ counts the numbers of children which are ancestors of some nodes drawn as thick red circles.
	Those highlighted nodes represent primary left $\PRapprox$-seeds $\{3,4,5\}$ of $T[:5]$ in (a) and those $\{5,6\}$ of $T[:6]$ in (d).
	Paths from highlighted nodes to the root are highlighted, so that $\LiveChildren[j]$ is the number of highlighted edges from $j$.
	\label{fig:covertree}}
\end{figure}
\end{example}
We remark that Li \& Smyth's original algorithm maintains an array $\Dead$ that represents whether $j \notin \LSeed(T[:i])$ in addition to the arrays used in our algorithm.
Our algorithm judges the property using two arrays $\Border$ and $\LiveChildren$ based on Lemmas~\ref{lem:clearlylseed} and~\ref{lem:deadproperty}.
The reason why their algorithm requires the additional array is that it performs the inner \textbf{for} loop of \Cref{algln:inner_for} in the reverse order.
If we perform the loop in the reverse order without the auxiliary array, in the above example,
in the iteration on $j=4$, we obtain the tree in \Cref{fig:covertree} (d),
and then in the iteration on $j=3$, the value of $\LiveChildren[\LCover[3]]=\LiveChildren[2]$ is decremented to $0$ and further more $\LiveChildren[\LCover[2]]=\LiveChildren[1]$ is decremented to 1.
Their algorithm stops iteration of the \textbf{while} loop at Line~\ref{alg:dead_condition} if $\Dead[j]=\True$, to restrain excessive decrement of $\LiveChildren[j]$.

\begin{theorem}\label{thm:lcover}
	Given the \ABorder{} array $\Border_T$ of $T$,
	\Cref{alg:lcover} computes the longest \ACover{} array $\LCover_T$ of $T$ in $O(n)$ time.
\end{theorem}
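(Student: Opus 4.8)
The plan is to prove by induction on $i$ that the four numbered invariants stated just before the theorem hold at the end of the $i$-th iteration of the outer \textbf{for} loop; invariant~(\ref{inv:Cov}) at $i=n$ then yields $\LCover = \LCover_T$, while a separate amortized argument gives the $O(n)$ bound. For the base case $i=0$, the initial assignments $\LiveChildren[j]=0$ and $\LargestLive[j]=j$ satisfy all four invariants: (\ref{inv:LL2}) and (\ref{inv:Cov}) are vacuous, (\ref{inv:Child}) holds because $\varepsilon$ has no \ALeftSeed{s}, and (\ref{inv:LL1}) holds since $\LargestLive[j]=j$ unconditionally. For the inductive step I assume the invariants after iteration $i-1$ and trace the three code blocks in order: the conditional redirection of $\LargestLive[\Border[i]]$ (lines~\ref{algln:ifdead}--\ref{alg:longest_LargeLive}), the assignment of $\LCover[i]$ with the increment of $\LiveChildren$ (lines~\ref{algln:lcover}--\ref{alg:livechildren++}), and the inner loop that decrements $\LiveChildren$ (from line~\ref{algln:inner_for} onward).

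The heart of the argument concerns $\LargestLive$ and $\LCover$ (invariants~(\ref{inv:LL1}),~(\ref{inv:LL2}),~(\ref{inv:Cov})). Here I would show that the guard on line~\ref{algln:ifdead} fires exactly when $\LargestLive[\Border[i]]$ must be recomputed, i.e.\ when $T[:\Border[i]] \notin \LSeed(T[:i-1])$ and $\Border[i]$ has just left the range validated by invariant~(\ref{inv:LL2}) at $i-1$. By Lemma~\ref{lem:setdead} together with invariant~(\ref{inv:Child}) at $i-1$, $T[:\Border[i]] \in \LSeed(T[:i-1])$ iff $\Border[i]$ is primary for $T[:i-1]$ or $\LiveChildren[\Border[i]]>0$; combining this with $\Border_T[i] \le \Border_T[i-1]+1$ (Lemma~\ref{lem:barrayproperty}), the arithmetic guard $0<2\Border[i]<i$ and $\LiveChildren[\Border[i]]=0$ is shown to capture precisely this event. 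Then Lemma~\ref{lem:updatellive+} justifies the assignment $\LargestLive[\Border[i]] \leftarrow \LargestLive[\LCover[\Border[i]]]$, whose right-hand side equals $\LLive_T(i-1,\LCover[\Border[i]]) = \LLive_T(i,\Border[i])$ by invariant~(\ref{inv:LL2}) at $i-1$, establishing invariant~(\ref{inv:LL2}). Lemma~\ref{lem:updatelongestcover} then gives $\LCover[i]=\LargestLive[\Border[i]]=\LCover_T[i]$, which is invariant~(\ref{inv:Cov}), and I verify invariant~(\ref{inv:LL1}) by checking its three disjuncts against the one modified entry and the unchanged ones.

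For $\LiveChildren$ (invariant~(\ref{inv:Child})), the increment on line~\ref{alg:livechildren++} accounts for the single new primary seed $i \in P_i \setminus P_{i-1}$, whose parent in the $\approx$-cover tree is $\LCover[i]$; as $\LCover[i]$ was already a \ALeftSeed{} of $T[:i-1]$, no higher ancestor changes status, so no further increment is needed. The inner \textbf{for} loop ranges over $P_{i-1}\setminus P_i = \{c_2,\dots,c_1-1\}$, a well-formed range because $c_1 \ge c_2$ by Lemma~\ref{lem:barrayproperty}, and for each departing index the \textbf{while} loop walks up the tree, decrementing a parent's count exactly when a child has dropped out of $\LSeed(T[:i])$. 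By Lemma~\ref{lem:deadproperty}, a non-primary prefix is a \ALeftSeed{} iff it still has a live child, so halting as soon as $\LiveChildren[j]>0$ is correct; this re-bases $\LiveChildren$ from $P_{i-1}$ to $P_i$ and establishes invariant~(\ref{inv:Child}).

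For the running time, everything outside the inner loop costs $O(1)$ per iteration. The potential $\sum_j \LiveChildren[j]$ increases by exactly one at line~\ref{alg:livechildren++} in each of the $n$ iterations and decreases by one at each \textbf{while}-loop step; since it stays nonnegative and starts at $0$, the total number of \textbf{while}-loop steps over the whole run is at most $n$, so the algorithm runs in $O(n)$ time. I expect the main obstacle to be the second paragraph: rigorously matching the purely arithmetic guard $0<2\Border[i]<i \wedge \LiveChildren[\Border[i]]=0$ to the semantic event ``$T[:\Border[i]]$ is no longer a \ALeftSeed{} of $T[:i-1]$ and lies outside the range validated at step $i-1$,'' in particular separating the boundary case $\Border[i]=\Border[i-1]+1$ from $\Border[i]\le\Border[i-1]$ and confirming that invariant~(\ref{inv:LL1}) keeps $\LargestLive$ correct on the indices the update leaves untouched.
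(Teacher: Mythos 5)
Your plan follows the paper's proof almost exactly for the correctness part: the paper proves the same four invariants by induction on $i$, splitting the inductive step into Lemma~\ref{lem:lcover_invariant1} (the invariants on $\LargestLive$ and $\LCover$, using Lemmas~\ref{lem:barrayproperty}, \ref{lem:clearlylseed}, \ref{lem:deadproperty}, \ref{lem:updatellive+} and~\ref{lem:updatelongestcover} exactly as you propose, with the same case split on whether $T[:\Border[i]] \in \LSeed(T[:i-1])$ and the same arithmetic matching of that event to the guard $0<2\Border[i]<i$ and $\LiveChildren[\Border[i]]=0$) and Lemma~\ref{lem:lcover_invariant2} (the re-basing of $\LiveChildren$ from $P_{i-1}$ to $P_i$; note that the paper needs a further inner induction, Eq.~(\ref{eq:sd_sd}), to describe the intermediate states of $\LiveChildren$ during the \textbf{while} loop and to justify that each decrement removes an element actually counted --- this is the bookkeeping your third paragraph compresses into one sentence, and it is where most of the work lies). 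Where you genuinely diverge is the time bound: the paper (Lemma~\ref{lem:lcover_time}) shows that for each fixed $j$ the \textbf{while} condition is judged true at most once over the whole run, via a semantic argument that this can happen only at the least $i$ with $T[:j] \notin \LSeed(T[:i])$; you instead use the potential $\sum_j \LiveChildren[j]$, which is incremented exactly $n$ times at Line~\ref{alg:livechildren++} and decremented once per \textbf{while} step, and stays nonnegative because the invariants keep every entry a set cardinality. Your amortized argument is more elementary and avoids any appeal to left-seed semantics, at the price of leaning on the correctness invariants for nonnegativity (which the paper's argument does as well); to complete it you should also record the telescoping bound $\sum_i (c_1-c_2) \le n$ on the total number of inner \textbf{for} iterations, since each such iteration performs one condition check even when the \textbf{while} body never executes.
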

\begin{proof}
	We prove the above invariants by induction on $i$.
	In the first iteration, neither of the \textbf{if} antecedents are satisfied.
	At the end of the iteration, we have
	 $\LCover[1]=\LargestLive[\Border[1]]=\LargestLive[0]=0$ and $\LiveChildren[0]=1$.
	Together with the initialization, all the arrays satisfy the above invariants.
	By Lemmas~\ref{lem:lcover_invariant1} and~\ref{lem:lcover_invariant2}, finally the algorithm computes $\LCover_T$.
	The linear-time complexity is shown in Lemma~\ref{lem:lcover_time}.
\qed\end{proof}

\begin{corollary}
	If $\Border_T$ can be computed in $\beta(n)$ time,
	$\LCover_T$ can be computed in $O(\beta(n) + n)$ time.
\end{corollary}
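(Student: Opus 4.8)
The plan is to establish, by induction on the iteration index $i$, the four numbered invariants on $\LargestLive$, $\LCover$, and $\LiveChildren$; since invariant~(\ref{inv:Cov}) asserts exactly $\LCover[j]=\LCover_T[j]$ for $1\le j\le i$, the correctness claim follows once the invariants are shown to hold at $i=n$. The base case $i=1$ is immediate: both \textbf{if}-antecedents fail, and one checks directly that $\LCover[1]=\LargestLive[0]=0$ and $\LiveChildren[0]=1$ together with the untouched initial entries satisfy all four invariants. For the inductive step I would assume the invariants at the end of iteration $i-1$ and verify them at the end of iteration $i$, handling the $\LargestLive$/$\LCover$ update and the $\LiveChildren$ update separately.

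For the $\LCover$ value, the lever is Lemma~\ref{lem:updatelongestcover}, which gives $\LCover_T[i]=\LLive_T(i,\Border_T[i])$; so it suffices to show that after \Cref{alg:longest_LargeLive} the entry $\LargestLive[\Border[i]]$ equals $\LLive_T(i,\Border_T[i])$, i.e.\ invariant~(\ref{inv:LL2}) at $j=\Border_T[i]$, since \Cref{algln:lcover} then copies it into $\LCover[i]$. Lemma~\ref{lem:updatellive+} splits this into two cases by whether $T[:\Border_T[i]]\in\LSeed(T[:i-1])$: when it is, $\LLive_T(i,\Border_T[i])=\LLive_T(i-1,\Border_T[i])$ and the stored value already agrees by invariants~(\ref{inv:LL1}) and~(\ref{inv:LL2}) at $i-1$, so the test fails and no update fires; when it is not, $\LLive_T(i,\Border_T[i])=\LLive_T(i-1,\LCover_T[\Border_T[i]])$, and since $\LCover_T[\Border_T[i]]<\Border_T[i]\le\Border_T[i-1]+1$ (Lemma~\ref{lem:barrayproperty}) this index is at most $\Border_T[i-1]$, so invariant~(\ref{inv:LL2}) at $i-1$ identifies it with the value $\LargestLive[\LCover[\Border_T[i]]]$ that \Cref{alg:longest_LargeLive} copies. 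The delicate point is to prove that the concrete test of \Cref{algln:ifdead}, ``$\LiveChildren[\Border[i]]=0$ and $0<2\cdot\Border[i]<i$'', is equivalent to ``$\Border_T[i]\ge1$ and $T[:\Border_T[i]]\notin\LSeed(T[:i-1])$''; here Lemma~\ref{lem:barrayproperty} makes $2\Border_T[i]<i$ coincide with $\Border_T[i]$ lying strictly below the primary range $P_{i-1}$, while Lemma~\ref{lem:setdead} together with invariant~(\ref{inv:Child}) at $i-1$ identifies the non-primary \ALeftSeeds{} of $T[:i-1]$ precisely with indices carrying a positive $\LiveChildren$ count.

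The maintenance of $\LiveChildren$ (invariant~(\ref{inv:Child})) is where the real work lies. By Lemma~\ref{lem:setdead} the quantity $|\LSC(i,j)|$ counts the children of $j$ that are ancestors of some node of the primary range $P_i$ in the $\approx$-cover tree, so updating the array amounts to converting a count based on $P_{i-1}$ into one based on $P_i$. I would first observe that $P_i\setminus P_{i-1}=\{i\}$ and $P_{i-1}\setminus P_i=\{c_2,\dots,c_1-1\}$, exactly the range of the inner \textbf{for} loop of \Cref{algln:inner_for}. The increment at \Cref{alg:livechildren++} then accounts for the new child $i$ of $\LCover[i]$ (and no higher ancestor needs adjusting, since $\LCover[i]$ was already a \ALeftSeed{} of $T[:i-1]$), while the \textbf{while} loop of \Cref{alg:dead_condition} propagates the removal of each departing index up the tree, decrementing $\LiveChildren[\LCover[j]]$ along exactly those ancestors that lose all live support and thereby cease to be \ALeftSeeds{} of $T[:i]$, and halting at the first ancestor whose count stays positive (which, by Lemma~\ref{lem:setdead}, is still a \ALeftSeed{}). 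Invariant~(\ref{inv:LL1}), maintained in tandem, is the bookkeeping that keeps $\LargestLive[j]=j$ until $j$ first turns dead, which is what licenses the single assignment of \Cref{alg:longest_LargeLive}.

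Finally, for the $O(n)$ bound I would argue amortized. The header of the inner \textbf{for} loop executes $c_1-c_2=1+\Border_T[i-1]-\Border_T[i]$ times in iteration $i$, which is nonnegative by Lemma~\ref{lem:barrayproperty} and telescopes to $O(n)$ over all $i$. For the \textbf{while} loop, each pass performs one decrement at \Cref{alg:livechildren--}; since increments occur exactly once per iteration at \Cref{alg:livechildren++} and $\LiveChildren$ stays nonnegative, the total number of decrements, hence of \textbf{while}-loop passes, is at most $n$, giving overall linear time. The main obstacle I anticipate is the correctness of invariant~(\ref{inv:Child}): one must prove that the decrement cascade terminates exactly at the nodes that remain \ALeftSeeds{} of $T[:i]$ and nowhere earlier, because this single fact simultaneously underwrites the deadness test of \Cref{algln:ifdead} (and thus invariants~(\ref{inv:LL2}) and~(\ref{inv:Cov})) and the stopping behaviour on which the amortized time bound depends.
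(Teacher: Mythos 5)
Your plan is correct, and for the correctness of \Cref{alg:lcover} it follows essentially the same route as the paper: the corollary is just \Cref{thm:lcover} preceded by a $\beta(n)$-time computation of $\Border_T$, and your induction on the four invariants --- reducing \Cref{algln:lcover} to Lemmas~\ref{lem:updatelongestcover} and~\ref{lem:updatellive+}, identifying the test of \Cref{algln:ifdead} with ``$\Border_T[i]\ge 1$ and $T[:\Border_T[i]]\notin\LSeed(T[:i-1])$'' via Lemmas~\ref{lem:barrayproperty}, \ref{lem:clearlylseed} and~\ref{lem:deadproperty}, and maintaining $\LiveChildren$ by converting the count from $P_{i-1}$ to $P_i$ --- is exactly the content of Lemmas~\ref{lem:lcover_invariant1} and~\ref{lem:lcover_invariant2}. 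The one place you genuinely diverge is the linear-time bound. The paper (Lemma~\ref{lem:lcover_time}) shows that the \textbf{while} condition at \Cref{alg:dead_condition} is judged true at most once for each index $j$, namely at the least $i$ with $T[:j]\notin\LSeed(T[:i])$; you instead charge each pass of the \textbf{while} body to the unique increment at \Cref{alg:livechildren++}, using nonnegativity of $\LiveChildren$. Your charging argument is shorter, but be aware that nonnegativity is not free: that the cascade never decrements a zero entry is precisely what invariant~(\ref{inv:Child}) (and, in Li and Smyth's original formulation, the $\Dead$ array) guarantees, so your time bound leans on the correctness invariant just as the paper's does --- which is fine, since you prove that invariant anyway, but it should be stated explicitly rather than assumed.
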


\begin{lemma}\label{lem:lcover_invariant1}
Suppose that all the invariants hold at the beginning of the $i$-th iteration of the outer \textbf{for} loop.
Then, at the end of the $i$-th loop, the invariants on $\LargestLive$ and $\LCover$ are satisfied.
\end{lemma}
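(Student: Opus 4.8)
The plan is to track exactly which entries the $i$-th iteration touches before the $\LiveChildren$ bookkeeping starts. The lines preceding \Cref{algln:lcover} modify only $\LargestLive[\Border[i]]$ (at \Cref{alg:longest_LargeLive}) and $\LCover[i]$ (at \Cref{algln:lcover}), and they read only $\Border$, $\LCover$, and the single entry $\LiveChildren[\Border[i]]$, whose value is still the one from the end of iteration $i-1$; by the inductive hypothesis that value equals $|\LSC(i-1,\Border_T[i])|$. Writing $b=\Border_T[i]$ and $b'=\Border_T[i-1]$, and recalling $b\le b'+1$ from \Cref{lem:barrayproperty}, all three invariants on $\LargestLive$ and $\LCover$ follow once I understand precisely when the antecedent of \Cref{algln:ifdead} fires. (The first invariant also involves the updated $\LiveChildren$ counts, so its re-establishment is coordinated with \Cref{lem:lcover_invariant2}.)

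The central claim is that the test on \Cref{algln:ifdead} holds iff $b>0$ and $T[:b]\notin\LSeed(T[:i-1])$; this is exactly what makes \Cref{alg:longest_LargeLive} agree with the recurrence of \Cref{lem:updatellive+}. To prove it I would first translate the read values into the left-seed property: from \Cref{lem:setdead} and \Cref{lem:longestrecursive}, together with $\LiveChildren[b]=|\LSC(i-1,b)|$, one gets that $T[:b]\in\LSeed(T[:i-1])$ iff $b$ lies in the primary range $P_{i-1}=\{\,k\mid i-1-b'\le k\le i-1\,\}$ or $\LiveChildren[b]>0$. A short calculation from $b\le b'+1$ shows that $b\notin P_{i-1}$ already forces $0<2b<i$; the reverse implication, that the numeric test cannot fire on an index in $P_{i-1}$, is the crux and is deferred to the last paragraph.

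Granting the central claim, the invariant on $\LargestLive$ (invariant 2) at $j=b$ is handled as follows. If the test fires, \Cref{lem:updatellive+} gives $\LLive_T(i,b)=\LLive_T(i-1,b)=\LLive_T(i-1,\LCover_T[b])$; since $\LCover_T[b]<b\le b'+1$ we have $\LCover_T[b]\le b'$, so invariant 3 and invariant 2 for $i-1$ give $\LargestLive[\LCover[b]]=\LLive_T(i-1,\LCover_T[b])$, making the assignment correct. If the test does not fire, then either $b=0$, where $\LargestLive[0]=0$ is untouched, or $T[:b]\in\LSeed(T[:i-1])$, where $\LLive_T(i,b)=\LLive_T(i-1,b)$ and the stored value is already correct: for $b\le b'$ directly by invariant 2 for $i-1$, and for the boundary $b=b'+1$ by invariant 1 together with the monotonicity of $\LSeed$, which ensures $T[:b]$ was live throughout, so $\LargestLive[b]$ was never redirected and still equals its initial value $b=\LLive_T(i-1,b)$. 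For $j<b$ the entries are untouched and $\LLive_T(i,j)=\LLive_T(i-1,j)$ by \Cref{lem:updatellive+}, so invariant 2 persists. Invariant 3 is then immediate: $\LCover[j]$ is unchanged for $j<i$, and $\LCover[i]=\LargestLive[b]=\LLive_T(i,b)=\LCover_T[i]$ by \Cref{lem:updatelongestcover}. Finally invariant 1 is re-checked on the one changed entry $\LargestLive[b]$ against its three disjunctive conditions.

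I expect the crux to be the false-positive direction of the central claim: ruling out that the test fires on a primary left-seed, i.e.\ showing that $b\in P_{i-1}$ and $2b<i$ force $\LiveChildren[b]>0$. First note these two conditions already entail $b\le b'$. Set $p=i-1-b'\ge 1$, the smallest $\approx$-period of $T[:i-1]$ coming from the longest border $T[:b']$; the border relation $T[:b']\approx T[p+1:i-1]$ restricts, under substring consistency (\Cref{def:scer}), to $T[:b]\approx T[p+1:p+b]$, so positions $1$ and $p+1$ are \AOccurs{} of $T[:b]$ in $T[:b+p]$ with gap $p\le b$. Exactly as in the proof of \Cref{lem:clearlylseed}, this witnesses $T[:b]\in\Cov(T[:b+p])$; since $b+p\in P_{i-1}$ and $b+p>b$, the prefix $T[:b]$ is a proper cover-ancestor of a primary node, hence has a child that is a \ALeftSeed{} of $T[:i-1]$, giving $\LiveChildren[b]=|\LSC(i-1,b)|>0$ and the desired contradiction. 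This step, which replaces the Fine--Wilf periodicity argument that would suffice under the identity relation, is the hard part, because under a general SCER a primary left-seed need not cover the whole of $T[:i-1]$ and the covering occurrences must be produced by hand from the border relation.
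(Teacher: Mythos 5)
Your proposal is correct and follows the same overall plan as the paper's proof: everything is reduced to the claim that the test on \Cref{algln:ifdead} fires exactly when $\Border_T[i]>0$ and $T[:\Border_T[i]]\notin\LSeed(T[:i-1])$, after which \Cref{lem:updatellive+} and the invariants for $i-1$ settle each branch. The one place where you genuinely diverge is the ``false-positive'' direction, i.e., ruling out that the test fires on a primary \ALeftSeed{}. The paper handles the live case by asserting $b_i=b_{i-1}+1$ and concluding that $b_i\ge i-1-b_{i-1}$ implies $2b_i\ge i$; you instead show that $b\in P_{i-1}$ together with $2b<i$ forces $\LiveChildren[b]>0$, by restricting the border relation $T[:b']\approx T[p+1:i-1]$ to exhibit $T[:b]\in\Cov(T[:b+p])$ for $p=i-1-b'$ and then reading off a live child of $b$ from Lemmas~\ref{lem:longestrecursive} and~\ref{lem:setdead}. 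Your route is self-contained and does not rely on the equality $b_i=b_{i-1}+1$ (which \Cref{lem:barrayproperty} alone only bounds from above), at the cost of re-running the occurrence construction from the proof of \Cref{lem:clearlylseed}; both arguments are valid ways to close that sub-case. Two minor points: your handling of invariant~\ref{inv:LL1} is left implicit, whereas the paper explicitly notes that $\LargestLive[j]$ is redirected only under the negation of the invariant's conditions and that, by \Cref{lem:barrayproperty}, the invariant can never again require $\LargestLive[j]=j$ afterwards --- you should state that monotonicity step rather than defer it. Also, in the boundary case $b=b'+1$ with $T[:b]\in\LSeed(T[:i-1])$, your appeal to ``never redirected in earlier iterations'' reaches outside the lemma's hypothesis, which only grants the invariants at the start of iteration $i$; invariant~\ref{inv:LL1} at that moment already yields $\LargestLive[b]=b$ directly, which is what the paper uses and is all you need.
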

\begin{proof}
	Assume that
	$\LiveChildren$, $\LargestLive$, and $\LCover$ hold the above properties at the end of the $(i-1)$-th iteration.
    Let $b_{i} = \Border[i]$ and $b_{i-1} = \Border[i-1]$.

    We first show that the invariant on $\LargestLive$ is satisfied.
    Concerning the first claim on $\LargestLive$,
    the value of $\LargestLive[j]$ can be altered from its initial value $j$ only when
    $\LiveChildren[j] = 0$, $0 < 2j < i$ and $j=b_i$,
    in which case, the invariant does not necessitate $\LargestLive[j] = j$.
    On the other hand, by \Cref{lem:barrayproperty},
    if $\Border_T[j] < i-1-\Border_T[i-1]$, then $\Border_T[j] < i-\Border_T[i]$.
    Therefore, once the value of $\LargestLive[j]$ has been altered from $j$, the invariant will never necessitate $\LargestLive[j] = j$.

    Concerning the second claim on $\LargestLive$, suppose $j \le b_i$.
    If $j< b_i$, then $j \le b_{i-1}$ by \Cref{lem:barrayproperty}.
    By the induction hypothesis on $\LargestLive[j]$ and Lemma~\ref{lem:updatellive+}, $\LargestLive[j]=\LLive(i-1,j)=\LLive(i,j)$.
    It remains to show $\LargestLive[b_i]=\LLive(i,b_i)$.

	If $b_i = 0$, $\LLive_T(i,b_{i}) = \LLive_T(i-1,b_{i}) = 0$.
	Suppose $b_i > 0$ and $T[:b_{i}] \notin \LSeed(T[:i-1])$.
	Let $m=\LCover[b_i]$, for which $m < b_i \le b_{i-1}+1$.
    By \Cref{lem:updatellive+} and the induction hypothesis on $\LargestLive[m]$, we have $\LLive_T(i, b_i) = \LLive_T(i-1, m) = \LargestLive[m]$.
	By Lemmas~\ref{lem:clearlylseed} and~\ref{lem:deadproperty} and the induction hypothesis, $b_i < i-1-b_{i-1}$ and $\LiveChildren[b_i] = 0$.
	Thus, since $2b_i \le b_i + b_{i-1}+1 < i-1+1 = i$,
	the algorithm lets $\LargestLive[b_{i}] = \LargestLive[m]$ in Line~\ref{alg:longest_LargeLive}, which fulfills the invariant on $\LargestLive$.

    Suppose $T[:b_{i}] \in \LSeed(T[:i-1])$.
    In this case, there is $k < b_i$ such that $T[:b_i] \in \Cov(T[:i-1-k])$.
    By Lemma~\ref{lem:cover_overlap}, $T[:b_{i}] \in \Cov(T[:i]) \subseteq \LSeed(T[:i])$ and thus $\LLive_T(i,b_i) = b_i$.
    By Lemma~\ref{lem:barrayproperty}, $b_i = b_{i-1}+1$ holds.
    By Lemmas~\ref{lem:clearlylseed} and~\ref{lem:deadproperty}, either $b_i \ge i-1-b_{i-1}$ or $\LiveChildren[b_i] > 0$.
	The former case implies $2 b_i \ge i$ and thus in either case the algorithm does not execute Line~\ref{alg:longest_LargeLive}.
	By the induction hypothesis, $\LargestLive[b_i]=b_i$, which fulfills the invariant on $\LargestLive[b_i]$.

	The invariant on $\LCover$ is fulfilled in Line~\ref{algln:lcover}, which makes $\LCover[i] = \LargestLive[b_{i}]$ in accordance with \Cref{lem:updatelongestcover}.
\qed\end{proof}

\begin{lemma}\label{lem:lcover_invariant2}
If the invariants hold at the beginning of the $i$-th iteration of the outer \textbf{for} loop,
 the invariant on $\LiveChildren$ holds at the end of the $i$-th loop.
\end{lemma}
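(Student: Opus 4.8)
The plan is to prove the remaining invariant, item~4 on $\LiveChildren$, by reading every quantity off the $\approx$-cover tree. Write $P_i=\{\,k\mid i-\Border_T[i]\le k\le i\,\}$ for the index range of primary $\approx$-seeds. By Lemma~\ref{lem:setdead}, $T[:k]\in\LSeed(T[:i])$ holds exactly when $k$ is an ancestor (possibly $k$ itself) of some element of $P_i$; I will call such $k$ \emph{live at stage~$i$}. In this language the goal invariant asserts that $\LiveChildren[j]$ counts the children of $j$ that are live at stage~$i$, i.e.\ $\LiveChildren[j]=|\LSC(i,j)|$. Since $\LCover_T[k]<k$ for every node, the set of live nodes is upward closed in the tree, a fact I will use throughout. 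Using Lemma~\ref{lem:barrayproperty} to get $c_2\le c_1$ for $c_1=i-\Border[i]$ and $c_2=(i-1)-\Border[i-1]$, the update to be realised is $P_i=\bigl(P_{i-1}\setminus\{c_2,\dots,c_1-1\}\bigr)\cup\{i\}$, which I split into one insertion and the deletions of $c_2,\dots,c_1-1$.

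For the insertion I would argue that the single increment of $\LiveChildren[\LCover[i]]$ in Line~\ref{alg:livechildren++} is exactly the required change. By Lemma~\ref{lem:updatelongestcover} together with the invariant on $\LargestLive$ already proved in Lemma~\ref{lem:lcover_invariant1}, we have $\LCover[i]=\LCover_T[i]=\LLive_T(i,\Border_T[i])\in\LSeed(T[:i])$, and since $\LCover_T[i]<i$, the monotonicity of $\approx$-left seeds noted earlier makes it live at stage~$i-1$ as well. Hence $\LCover_T[i]$ and all of its ancestors are already live before this iteration, so enlarging the range from $P_{i-1}$ to $P_{i-1}\cup\{i\}$ creates exactly one new live node, namely the brand-new node $i$, whose parent is $\LCover_T[i]$; thus only $\LiveChildren[\LCover_T[i]]$ must increase, and by one.

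The deletions are handled by the inner \textbf{for}/\textbf{while} loops, for which I would carry the loop invariant that, once $j=v$ has been processed, $\LiveChildren[k]$ equals the number of children of $k$ that are ancestors of some element of $\{v+1,\dots,i\}$, for all $k$; the base case $v=c_2-1$ is the state right after the insertion, and after the last value $v=c_1-1$ the active range is $\{c_1,\dots,i\}=P_i$, which is precisely item~4. Deleting $v$ from the active range kills $v$ exactly when it has no child in that ancestor set, i.e.\ when $\LiveChildren[v]=0$; in that case $v$ stops being a live child of $\LCover[v]$ (one decrement) and the death propagates upward, the \textbf{while} loop ascending until it meets an ancestor that still retains a live child. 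Because at each node of the path from $v$ to the root only the child lying on that path is an ancestor of $v$, no count off the path is affected, so one decrement per level is the correct bookkeeping.

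The crux---and the reason the auxiliary $\Dead$ array of Li and Smyth is unnecessary---is to justify the test $\LiveChildren[j]=0$ as a sound deadness criterion while the range is scanned in \emph{increasing} order. The observation I would isolate is that the upward walk launched from $v$ visits only indices $\le v$, hence never an index in $\{v+1,\dots,i\}$; consequently the competing reason for being live, namely membership in the current active range, can never hold for a visited ancestor, so $\LiveChildren[j]=0$ really does certify that $j$ is dead. The one interference to exclude is with an already-deleted range node $w\in\{c_2,\dots,v-1\}$ that happens to be a proper ancestor of $v$: when $w$ was processed, $v$ still lay in the then-active range $\{w+1,\dots,i\}$, so $w$ had a live child and the \textbf{while} loop did not climb past it, leaving its count intact to be decremented correctly now along the path from $v$. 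Assembling the insertion step, the deletion loop invariant, and this soundness argument establishes item~4 at the end of the iteration; I expect this last point about the scan order to be the main obstacle.
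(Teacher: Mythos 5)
Your argument is correct and is essentially the paper's own proof in tree-theoretic dress: your running loop invariant --- $\LiveChildren[k]$ counts the children of $k$ that are ancestors of some element of $\{v+1,\dots,i\}$ --- is exactly the paper's claim $\LiveChildren[l]=|\lsc(v+1,i,l)|$ (Eq.~(\ref{eq:lc_sd})), and your one-decrement-per-level climb with the upward-closure and scan-order justification is the paper's inner induction on $r$ (Eq.~(\ref{eq:sd_sd})) together with conditions (\ref{eq:decrement}) and (\ref{eq:lc_sd-}). The only differences are presentational: the paper treats the indices $j\ge c_1$ separately via Lemma~\ref{lem:clearlylseed}, whereas you handle the insertion of node $i$ by monotonicity of \ALeftSeeds{} and upward closure of liveness in the $\approx$-cover tree.
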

\begin{proof}
	Assume that at the end of the $(i-1)$-th iteration, the invariants hold.
    Let $b_{i} = \Border[i]$, $b_{i-1} = \Border[i-1]$, $c_1 = i-b_{i}$, and $c_2 = (i-1)-b_{i-1}$.
    Note that $c_1 \ge c_2$ by Lemma~\ref{lem:barrayproperty}.

    First we discuss $\LiveChildren[j]$ for $j \ge c_1$.
	For any $k$ with $c_2 \le c_1 \le k < i$, by Lemma~\ref{lem:clearlylseed}, $T[:k] \in \LSeed(T[:i-1]) \cap \LSeed(T[:i])$.
	This means that for any $j$ with $c_2 \le c_1 \le j \le i$,
    \[
    \LSC(i,j) = \LSC(i-1,j) \cup I_j
    \]
    where $I_j=\{i\}$ for $j= \LCover_T[i]$ and $I_j=\emptyset$ for $j \neq \LCover_T[i]$.
    Accordingly, for those $j \ge c_1$, the algorithm realizes $\LiveChildren[j] =|\LSC(i-1,j)|+|I_j| = |\LSC(i,j)|$.

	It remains to show the invariants on $\LiveChildren[j]$ for $j < c_1$.
	By \Cref{lem:setdead}, $\LSC(i,j)$ can be rewritten as $\LSC(i,j)=\lsc(c_1,i,j)$ for
\[
		\lsc(k,l,j) = \LCover^{-1}[j] \cap \{\,\LCover^q[h] \mid k \le h \le l \text{ and }q \ge 0\,\}
\]
	where $\LCover^{-1}[j] = \{\, h \mid j = \LCover[h] \,\}$.
	In terms of  the \ACover{} tree, $\LCover^{-1}[j]$ is the set of children of $j$
	and $\lsc(k,l,j) $ is the set of children which have an element between $k$ and $l$ as a descendant (a node is thought to be a descendant of itself).
	Note that $0 \notin \LCover^{-1}[j]$ for any $j \ge 0$.
	After executing Line~\ref{alg:livechildren++} of Algorithm~\ref{alg:lcover}, together with the induction hypothesis, we have
	$\LiveChildren[j] = | \lsc(c_2,i,j) |$.
	If $c_1=c_2$, then the algorithm does not go into the inner \textbf{for} loop of Line~\ref{algln:inner_for}
	and we have done the proof.
	If $c_1>c_2$,
	it is enough to show that at the end of each iteration of the inner \textbf{for} loop of Line~\ref{algln:inner_for},
	\begin{equation}\label{eq:lc_sd}
		\LiveChildren[l] = | \lsc(j+1,i,l) |
    \end{equation}
	for all $l < c_1$.
	For $j=c_1-1$, we have $\LiveChildren[l] = |\lsc(c_1,i,l)| = |\LSC(i,l)|$ for all $l < c_1$.
	For this purpose, we show by induction on $r$ that
	 at the end of the $r$-th iteration of the \textbf{while} loop (\Cref{alg:dead_condition}), we have
    \begin{align}
    \text{\scalebox{0.98}[1]{$
		\LiveChildren[l] = \big|
	\lsc(j+1,i,l) \cup (\LCover^{-1}[l] \cap \{\LCover^q[j] \mid q \ge r \})\big|
	$}}
\label{eq:sd_sd}
	\end{align}
	for all $l < c_1$.
	Note that there always exists $r_j$ such that $\LCover^{r_j}[j]=0$, for which
	 $\LCover^{-1}[l] \cap \{\LCover^q[j] \mid q \ge r_j \} = \emptyset$, i.e., Eq.~(\ref{eq:sd_sd}) is equivalent to (\ref{eq:lc_sd}).

	For $r=0$, i.e., at the beginning of the first iteration of the \textbf{while} loop,
	Eq.~(\ref{eq:lc_sd}) for $j-1$ holds, i.e., $\LiveChildren[l] = | \lsc(j,i,l) |$, which is equivalent to (\ref{eq:sd_sd}) with $r=0$.

	Assuming the induction hypothesis (\ref{eq:sd_sd}) for $r$ holds, we show that it is the case for $r+1$.
	Increasing $r$ by one never expands the set on the right hand of (\ref{eq:sd_sd}).
	The set will lose an element $h$ iff $h = \LCover^{r}[j]$, $l=\LCover^{r+1}[j]$ and
	\begin{equation}
		\LCover^{r}[j] \notin \{\LCover^q[k] \mid j < k \le i ,\ q \ge 0\}
	\,.\label{eq:decrement}
	\end{equation}

	If $\LiveChildren[\LCover^{r}[j]] \neq 0$,
	the loop is not repeated.
	It is enough to show that for any $l < c_1$
	\begin{equation}\label{eq:lc_sd-}
		\LCover^{-1}[l] \cap \{\LCover^q[j] \mid q \ge r \} \subseteq \lsc(j+1,i,l)
	\,,\end{equation}
	 so that we establish (\ref{eq:lc_sd}).
	If $\LCover^{r}[j]=0$, $\LCover^{-1}[l] \cap \{\LCover^q[j] \mid q \ge r \} = \emptyset$.
	Clearly (\ref{eq:lc_sd-}) holds.
	Suppose $\LCover^{r}[j]\neq 0$.
	The assumption that $\LiveChildren[\LCover^{r}[j]] \neq 0$ means, by induction hypothesis (\ref{eq:sd_sd}), there is
	\begin{align*}
	k \in {}& \lsc(j+1,i,\LCover^r[j])
	\\ & \cup (\LCover^{-1}[\LCover^r[j]]\cap\{\LCover^{q}[j] \mid q \ge r\})
	\,.\end{align*}
	By $\LCover^{-1}[\LCover^r[j]]\cap\{\LCover^{q}[j] \mid q \ge r\}=\emptyset$,
	$k \in \lsc(j+1,i,\LCover^r[j])$, which means
	 $k = \LCover^s[h] \in \LCover^{-1}[\LCover^r[j]]$ for some $j< h \le i$ and $s \ge 0$,
	  i.e., $\LCover^{s+1}[h] = \LCover^{r}[j]$.
	For $1 \le l \le c_1$,
	if $\LCover^{q}[j] \in \LCover^{-1}[l]$ for some $q \ge r$, then
	\[
		 \LCover^{q-r+s+1}[h] = \LCover^{q}[j] \in \LCover^{-1}[l]\,.
	\]
	That is, $ \LCover^{q}[j] \in \lsc(j+1,i,l)$, which shows (\ref{eq:lc_sd-}) and thus (\ref{eq:lc_sd}).

	Suppose $\LiveChildren[\LCover^{r}[j]] = 0$.
	We show that (\ref{eq:decrement}) holds.
	By the induction hypothesis (\ref{eq:sd_sd}) for $r$, $\LiveChildren[\LCover^{r}[j]] = 0$ means
	\begin{align*}
		& \lsc(j+1,i,\LCover^{r}[j])
\\		& \cup (\LCover^{-1}[\LCover^{r}[j]] \cap \{\LCover^q[j] \mid q \ge r \})
		 = \emptyset
		 \,.
	\end{align*}
	If (\ref{eq:decrement}) did not hold, there were $j'$ and $q$ such that $\LCover^{r}[j]=\LCover^{q}[j']$ and $j < j' \le i$,
	where $q \ge 1$ by $\LCover^{r}[j] \le j < j'$.
	Then $\LCover^{q-1}[j'] \in \LCover^{-1}[\LCover^{r}[j]]$, which is a contradiction.
	So, the condition~(\ref{eq:decrement}) holds.
\qed\end{proof}

\begin{lemma}\label{lem:lcover_time}
\Cref{alg:lcover} runs in $O(n)$ time.
\end{lemma}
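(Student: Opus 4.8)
The plan is to bound the running time by an amortized analysis that separates the total cost into three parts: the $O(1)$ bookkeeping performed in each of the $n$ iterations of the outer \textbf{for} loop (the \textbf{if} test, the assignments to $\LCover[i]$ and $\LargestLive$, and the single increment), the total number of iterations of the inner \textbf{for} loop of \Cref{algln:inner_for} across all outer iterations, and the total number of executions of the body of the \textbf{while} loop of \Cref{alg:dead_condition}. The first part is clearly $O(n)$, so the work reduces to bounding the two nested loops.

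First I would bound the inner \textbf{for} loop. In the $i$-th outer iteration it runs $j$ from $c_2 = (i-1)-\Border[i-1]$ to $c_1-1 = (i-\Border[i])-1$, so it performs exactly $c_1 - c_2 = 1 + \Border[i-1] - \Border[i]$ iterations, counting also those iterations in which the \textbf{while} body is never entered. By \Cref{lem:barrayproperty} this quantity is nonnegative, so I may sum it over $i$ from $2$ to $n$ and telescope: $\sum_{i=2}^{n}(1 + \Border[i-1] - \Border[i]) = (n-1) + \Border[1] - \Border[n] \le n-1$, using $\Border[1]=0$. Hence the inner \textbf{for} loop contributes only $O(n)$ iterations in total, and with it $O(n)$ of the \textbf{while}-condition tests that immediately fail and exit.

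For the \textbf{while}-loop body I would use a potential argument with $\Phi = \sum_{j=0}^{n}\LiveChildren[j]$. After initialization $\Phi = 0$; it is increased by exactly $1$ at \Cref{alg:livechildren++}, once per outer iteration and nowhere else, hence at most $n$ times over the whole run. Each execution of the body at \Cref{alg:livechildren--} decreases $\Phi$ by exactly $1$. Crucially, the correctness invariant established in \Cref{lem:lcover_invariant2} gives $\LiveChildren[j] = |\LSC(i,j)| \ge 0$ throughout, so $\Phi$ is never negative. Writing $D$ for the number of \textbf{while}-body executions, we get $0 \le \Phi_{\mathrm{final}} = n - D$, whence $D \le n$. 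The total number of \textbf{while}-condition tests is then $D$ (before each body execution) plus one failing test per inner \textbf{for} iteration, which is again $O(n)$ by the previous paragraph.

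Combining the three bounds yields an overall running time of $O(n)$. The only delicate point is that the potential argument relies on $\LiveChildren$ remaining nonnegative, which is not evident from the pseudocode in isolation but follows from the correctness invariant of \Cref{lem:lcover_invariant2}; I would state this dependence explicitly. Everything else is routine telescoping and amortization, so I do not expect a genuine obstacle beyond making the nonnegativity dependence clear.
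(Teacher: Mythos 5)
Your proof is correct, but the heart of it differs from the paper's. Both arguments handle the inner \textbf{for} loop the same way, by telescoping $c_1-c_2$ over $i$ (the paper does this implicitly in the inequality $\sum_j f(j) \le n + \sum_j t(j)$). The divergence is in bounding the \textbf{while}-loop body. The paper argues semantically: it shows that the condition $\LiveChildren[j]=0$ on \Cref{alg:dead_condition} can be found true at most once per index $j$ over the entire run, because this happens exactly at the least $i>j$ with $T[:j]\notin\LSeed(T[:i])$ and ``death'' is monotone; this relies on Lemmas~\ref{lem:clearlylseed} and~\ref{lem:deadproperty} and yields the structural fact $t(j)\le 1$. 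You instead use a potential argument on $\Phi=\sum_j \LiveChildren[j]$: there are exactly $n$ increments (\Cref{alg:livechildren++}), each body execution is one decrement (\Cref{alg:livechildren--}), and the final value of $\Phi$ is nonnegative, so there are at most $n$ decrements. This is more elementary and mechanical -- it needs nothing about left $\approx$-seeds beyond the fact that the counters end up counting something -- at the price of depending on the correctness invariant of \Cref{lem:lcover_invariant2}, whereas the paper's argument gives the sharper per-node statement $t(j)\le1$. One small imprecision: you claim $\LiveChildren[j]=|\LSC(i,j)|\ge 0$ ``throughout,'' but \Cref{lem:lcover_invariant2} only guarantees this at the end of each outer iteration (mid-\textbf{while}, the counters are in the intermediate state of Eq.~(\ref{eq:sd_sd})); fortunately your argument only uses $\Phi_{\mathrm{final}}\ge 0$, which the end-of-run invariant does supply, so nothing breaks -- just drop the word ``throughout.''
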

\begin{proof}
	Let $t(j)$ and $f(j)$ be the numbers of times that the \textbf{while} condition on $j$ (\Cref{alg:dead_condition}) is judged true and false, respectively.
	Since $\sum_{j=0}^n f(j) \le n + \sum_{j=1}^n t(j)$, it is enough to show $t(j) \le 1$ for every $j$ to establish the linear-time complexity.
	Suppose that the algorithm finds $\LiveChildren[j]=0$ at the \textbf{while} loop in the $i$-th iteration of the outer \textbf{for} loop.
	We show that it happens for the least $i > j$ such that $T[:j] \notin \LSeed(T[:i])$.
	Note that the condition is checked only for $j < c_1$, where $c_1 = i - \Border[i]$.
	Therefore, $\LiveChildren[j]=0$ implies $T[:j] \notin \LSeed(T[:i])$ by Lemma~\ref{lem:deadproperty}.
	Since $T[:j] \notin \LSeed(T[:i])$ implies $T[:j] \notin \LSeed(T[:i'])$ for any $i' > i$,
	it is enough to show $T[:j] \in \LSeed(T[:i-1])$.
	For $c_2 = i-1-\Border[i-1]$,
	by the algorithm, $j = \LCover^q[k]$ for some $c_2 \le k < c_1$ and $q \ge 0$.
	If $q=0$, i.e., $c_2 \le j=k <c_1$, by Lemma~\ref{lem:clearlylseed}, $T[:j] \in \LSeed(T[:i-1])$.
	If $q \ge 1$, the value  $\LiveChildren[j]$ is decremented in the $q$-th iteration of the \textbf{while} loop, just before deciding  $\LiveChildren[j]=0$.
	Moreover, $T[:j] \notin \LSeed(T[:i])$ implies $j \neq \LCover[i]$, and hence $\LiveChildren[j]$ was strictly positive at the end of the $(i-1)$-th iteration of the outer \textbf{for} loop.
	By the invariant, $\LSC(i-1,j) \neq \emptyset$, which means $T[:j] \in \LSeed(T[:i-1])$.
\qed\end{proof}

\bibliographystyle{splncs04}
\bibliography{ref}

\end{document}